\numberwithin{equation}{section}
\newcommand{\const}{\mathrm{const}}
\newcommand{\Pb}{\mathrm{Prob}}
\newcommand{\Prob}{\mathbb{P}}
\newcommand{\Id}{\mathbbm{1}}
\newcommand{\I}{{\rm i}}
\newcommand{\D}{\mathrm{d}}
\newcommand{\R}{\mathbb{R}}
\newcommand{\N}{\mathbb{N}}
\newcommand{\Z}{\mathbb{Z}}
\DeclareMathOperator*{\virt}{virt}
\newtheorem{prop}{Proposition}[section]
\newtheorem{thm}[prop]{Theorem}
\newtheorem{lem}[prop]{Lemma}
\newtheorem{defin}[prop]{Definition}
\newtheorem{cor}[prop]{Corollary}
\newtheorem{cla}[prop]{Claim}
\newtheorem{rem}[prop]{Remark}
\title{Random tilings and Markov chains for interlacing particles}
\date{June 12, 2015}
\author{Alexei Borodin\thanks{Massachusetts Institute of Technology, Department of Mathematics, 77 Massachusetts Avenue, Cambridge, MA 02139-4307, USA. E-mail: {\tt borodin@math.mit.edu}}\and Patrik L.\ Ferrari\thanks{Institute for Applied Mathematics, Bonn University, Endenicher Allee 60, 53115 Bonn, Germany. E-mail: {\tt ferrari@uni-bonn.de}}}
\begin{document}

\maketitle
\sloppy
\begin{abstract}
We explain the relation between certain random tiling models and interacting
particle systems belonging to the anisotropic KPZ (Kardar-Parisi-Zhang)
universality class in $2+1$-dimensions. The link between these two \emph{a
priori} disjoint sets of models is a consequence of the presence of shuffling
algorithms that generate random tilings under consideration. To see the
precise connection, we represent both a random tiling and the corresponding
particle system through a set of non-intersecting lines, whose dynamics is
induced by the shuffling algorithm or the particle dynamics. The resulting class
of measures on line ensembles also fits into the framework of the Schur
processes.
\end{abstract}

\newpage
\section{Introduction}\label{SectIntro}
In this contribution we start by considering the domino tilings of the Aztec
diamond, probably the most well-known and studied random tiling model, and we
explain in quite some details how this model is equivalent to an interacting
particle system. After that two generalizations are presented.

The connection between the Aztec diamond and a $(1+1)$-dimensional interacting
particle system is well-known since the work of Jockusch, Propp and
Shor~\cite{JPS98}, where they used it to prove the \emph{arctic circle theorem}.
Indeed, one quarter of the arctic circle is nothing else than the hydrodynamic
limit of the height function associated with a totally asymmetric simple
exclusion process in discrete time with parallel update and packed initial
condition.

In~\cite{BF08} we constructed a class of $(2+1)$-dimensional interacting particle
systems, some of which have the property that their fixed time projections can
be viewed as a random tiling model, while certain $(1+1)$-dimensional projections
represent models in the Kardar-Parisi-Zhang universality class of stochastic
growth models (like totally asymmetric simple exclusion processes). The domino
tilings of the Aztec diamond is a special case that fits in that framework as
remarked at the end of Section 2.6 of~\cite{BF08}. This was not a complete
surprise, as the connection between the shuffling algorithm of the Aztec diamond
and an interacting particle system in $2+1$ dimensions was previously obtained
by Nordenstam~\cite{Nor08} (this was later extended to a few other cases
in~\cite{NY11}).

The particle configurations in the state space of the (discrete time) Markov
chain in~\cite{Nor08} are not in a one-to-one mapping to domino tilings of the
Aztec diamond, rather in a one-to-many relation. Here \emph{many} is not an
arbitrary quantity, but it is the number of random fillings in the shuffling
algorithm during the last iteration. The mapping is the same as the \mbox{one-to-many}
mapping between tilings of the Aztec diamond and the six-vertex model with
domain wall boundary conditions at the free-fermion point~\cite{ZJ00,FS06}. In
order to have a bijection one needs to have the information of the particle
configurations at the two latest time moments (this fact is used for instance
in the java applet~\cite{FerAZTEC}). The reason for this fact will be clear
when discussing the parallel update in the Markov chain,
see Section~\ref{SectAKPZ}.

To explain the relation between the Aztec diamond and our interacting particle
system, we map both systems to a set of non-intersecting (broken) lines viewed
as paths on a Lindstr\"om-Gessel-Viennot (LGV) planar directed graph. They
were first introduced by Johansson in~\cite{Jo03} in the study of the Airy$_2$
fluctuations of the border of the disordered phase. The previously studied line
ensembles representations as in~\cite{ZJ00} are still one-to-many, while the one
used in~\cite{Jo03,FS06} is in bijection with the tiling configurations.
This bijection results in the shuffling algorithm~\cite{EKLP92,Pro03} inducing
a Markovian dynamics on the ensembles of lines. The same line ensembles with
the same dynamics can be also obtained from the configurations of a system of
interacting particles in \mbox{$2+1$-dimensions} in discrete time with parallel
update~\cite{BF08} and with a particular initial condition.

The probability measures (on tilings, sets of particles, or line
ensembles) one obtains in this way  fall into the class of so-called
conditional $L$-ensembles and thus have determinantal correlation
functions~\cite{RB04}. These measures and the corresponding Markov dynamics can
also be described in terms of dimer models on Rail
Yard graphs~\cite{BBBCCV14,BBCCR15}.

\subsubsection*{Connection with the Schur processes}
The constructions of the present paper can be recast into the framework of the
Schur processes (introduced in~\cite{OR01}) and Schur dynamics on them
(introduced in~\cite{Bor10}). While we won't do that here (see, however,
a recent exposition in~\cite{BBBCCV14}), let us make the connection more
explicit. The basic fact is that the weights that we associate to the two
sections of our LGV graphs
\begin{figure}
\begin{center}
\psfrag{m1}[r]{$\mu_1-1$}
\psfrag{m2}[r]{$\mu_2-2$}
\psfrag{m3}[r]{$\mu_3-3$}
\psfrag{l1}[c]{$\lambda_1-1$}
\psfrag{l2}[c]{$\lambda_2-2$}
\psfrag{l3}[c]{$\lambda_3-3$}
\psfrag{n1}[l]{$\nu_1-1$}
\psfrag{n2}[l]{$\nu_2-2$}
\psfrag{n3}[l]{$\nu_3-3$}
\includegraphics[height=4cm]{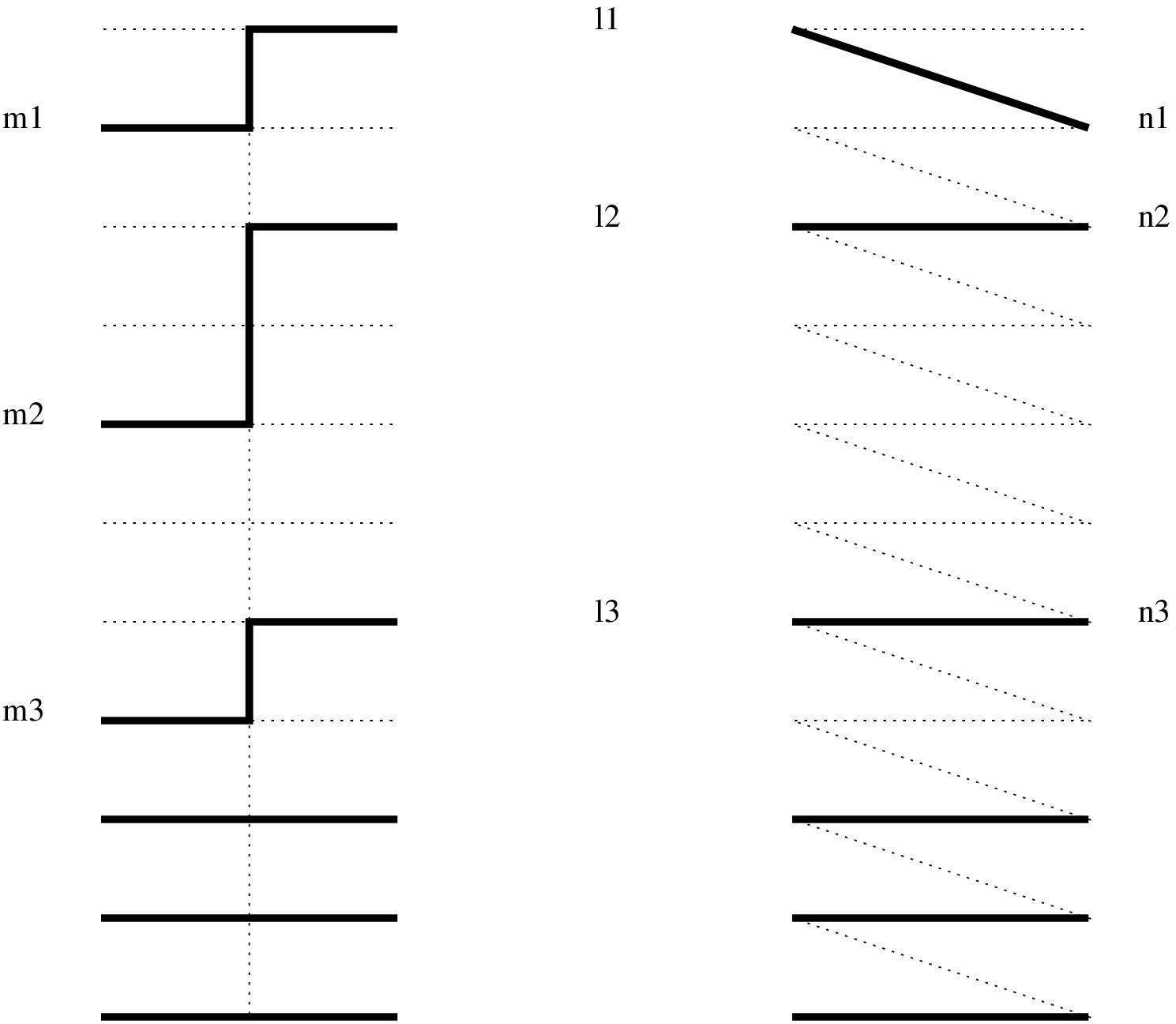}
\caption{LGV graphs and Schur process. The weight of the left one is $s_{\lambda/\mu}(\alpha)$, the weight of the right one is $s_{\lambda/\nu}(\hat \beta)$.}
\label{FigSchur}
\end{center}
\end{figure}
with weights of vertical edges in the first graph being $\alpha$ and weights of
skew edges in the second graph being $\beta$ (all other edges have weight 1),
can be viewed as values of the skew Schur functions
$s_{\lambda/\mu}(\alpha)$ and $s_{\lambda/\nu}(\hat \beta)$. Here $\alpha$
stands for the specialization of the symmetric functions into a single non-zero
variable equal to $\alpha$ (with complete homogeneous symmetric functions
specializing into $h_n(\alpha)=\alpha^n$, $n\geq 0$), and $\hat\beta$ stands for
the specialization into a single 'dual' variable equal to $\beta$ (with
complete homogeneous symmetric functions specializing into $h_n(\hat\beta)=1$ if
$n=0$, $h_n(\hat\beta)=\beta$ if $n=1$, and $h_n(\hat\beta)=0$ if $n\geq 2$).
The commutativity relations of type (\ref{eqIntertwining}) below then turn into
instances of the skew-Cauchy identity (see e.g.~\cite{Bor10}).

We chose to use more general notation in this paper (and in~\cite{BF08})
because of a potential applicability of the general setup to other models, which
was indeed realized for example in~\cite{BC11}.

Most of the work reported in this note was done in 2010, but it seems to not have
lost all of its novelty yet. Indeed very recently we have been asked about the details
of the connection between our work~\cite{BF08} and random tiling models. Thus
we decided to make them available with this publication.

\subsubsection*{Outlook}
In Section~\ref{SectAztec} we define the Aztec diamond, its domino
tilings, and introduce the mapping to non-intersecting line ensembles. In
Section~\ref{SectAKPZ} we present a class of Markov dynamics on interlacing
particle systems and explain which initial conditions and parameters one has
to choose to generate the same measure as the non-intersecting line ensembles
of the Aztec diamond. Finally, in Section~\ref{SectGeneralization} we present
two generalizations of the particle dynamics and explain which random tilings
they correspond to.

\subsubsection*{Acknowledgments}
The authors are grateful for earlier discussions with Senya Shlosman and more
recently with Dan Betea, J\'er\'emie Bouttier, and Sunil Chhita. The work of A.~Borodin is supported by the NSF grant
DMS-1056390. The work of P.L.~Ferrari is supported by the German Research
Foundation via the SFB 1060--B04 project.

\section{The Aztec diamond}\label{SectAztec}
In this section we introduce the Aztec diamond, a well-known random model, and explain his representation as a system of non-intersecting lines. The latter will be used later to make the connection with a Markov chain dynamics on interlacing particle configurations.

\subsection{The random tiling model}
Domino tilings of the Aztec diamond were introduced in~\cite{EKLP92}. For any \mbox{$N\in\N$}, we define the Aztec diamond $A_N$ of size $N$ as the union of all lattice squares $[m,m+1]\times[n,n+1]$, $m,n\in\Z$, such that they are inside the region \mbox{$\{(x,y): |x|+|y|\leq N+1\}$}. A tiling of the Aztec diamond is a configurations of $N^2$ dominoes (i.e., $2\times 1$ rectangles) such that the Aztec diamond $A_N$ is fully covered by dominoes. Alternatively, one can think of a tiling as a perfect matching on the bipartite graph obtained by considering the dual graph inside the Aztec diamond, see Figure~\ref{FigAztec1}.
\begin{figure}
\begin{center}
\includegraphics[height=5cm]{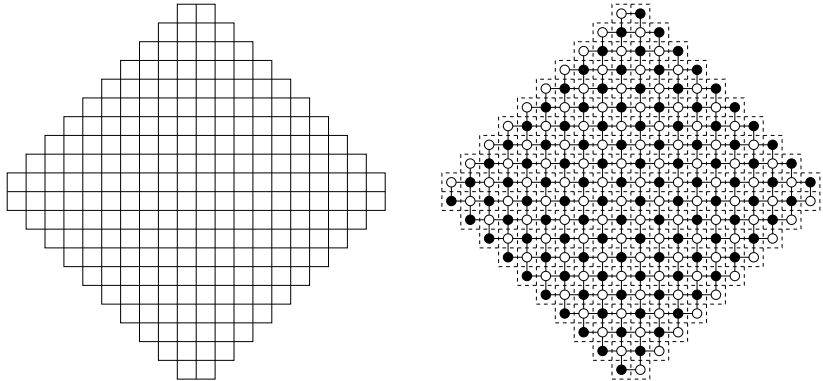}
\caption{Left panel: The Aztec diamond of size $N=10$. Right panel: The associated bipartite graph.}
\label{FigAztec1}
\end{center}
\end{figure}
Denote by ${\cal T}_N$ the set of all possible tilings of $A_N$. It is well-known~\cite{EKLP92} that $|{\cal T}_N|=2^{N(N+1)/2}$. A random tiling of the Aztec diamond $A_N$ is obtained defined by choosing a probability measure on ${\cal T}_N$. Specifically we consider the probability measure $\Prob_N$ given by
\begin{equation}\label{eqMeasAztec}
\Prob_N(T)=\frac{a^{v(T)}}{\sum_{S\in {\cal T}_N}a^{v(S)}},\quad T\in {\cal T}_N,
\end{equation}
where $v(S)$ is the number of vertical dominoes in $S$, and $a>0$ is a parameter.

As one can see from Figure~\ref{FigAztec2}, a typical random tiling of the Aztec diamond has four ordered regions and one disordered region in the middle, that has been studied in great details: the law of large number (Artic circle phenomenon)~\cite{JPS98}, the fluctuations of the boundaries (Airy$_2$ process)~\cite{Jo03}, local statistics~\cite{CEP96}, and Gaussian Free Field fluctuations in the bulk\footnote{The detailed computations of the convergence to the GFF have not been written down in papers. The precise statement can be found in Section 6 of~\cite{CJY14}. They are essentially the same computations as the ones made in Section 5 of~\cite{BF08}, except that the kernel is the one associated with the discrete time TASEP with parallel update~\cite{BFS07b}. At the end of the argument, one uses ideas that goes back to~\cite{Ken01}. A detailed statement of the expected result is available in Section 6 of~\cite{CJY14}.}, and the GUE-minor process at the turning points~\cite{JN06}.
\begin{figure}
\begin{center}
\includegraphics[height=4.5cm,angle=45]{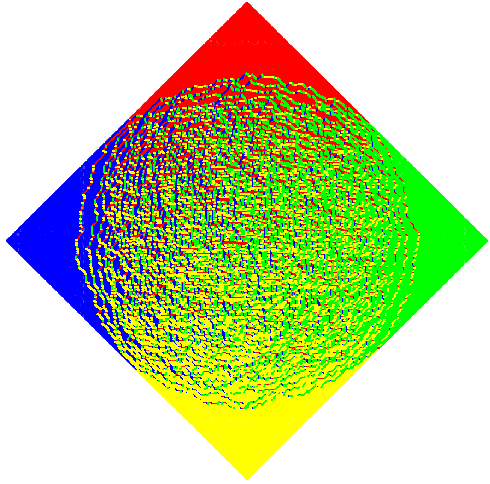}
\caption{A random tiling of an Aztec diamond of size $N=200$ and $a=1$. The colors represents the four types of dominoes.}
\label{FigAztec2}
\end{center}
\end{figure}

\subsubsection*{The shuffling algorithm}
To generate a random tiling of the Aztec diamond of size $N$, one can proceed iteratively from the Aztec diamond of size $N-1$ by means of the \emph{shuffling algorithm} introduced in~\cite{EKLP92}, that we recall briefly. For that purpose, let us give some names to the different types of dominoes: by superimposing the Aztec diamond to a checkerboard table (which of the chosen two possible ways is irrelevant), there are actually four types of dominoes, as the black site might be either to the right/left (resp. top/bottom) site of a horizontal (resp. vertical) domino. We call them North, East, South, and West dominoes according to the following rule: the North (resp. South) domino is the horizontal one that fits into the top (resp. bottom) most part of the Aztec diamond, and similarly the East (resp. West) domino is the vertical one that fits into the right (resp. left) most part of the Aztec diamond.

The shuffling algorithm is the following. Start with a domino tiling of $A_N$ distributed according to $\Prob_N$:
\begin{itemize}
\item[Step 1:] Move all the dominoes in the direction of their names by one unit. If in doing it a North and a South domino (resp. a East and a West domino) exchange their positions, remove them.
\item[Step 2:] At this point, the dominoes partially tile $A_{N+1}$. The empty region can be uniquely decomposed into $2\times 2$ blocs. Independently of each other, each bloc is tiled with two horizontal dominoes with probability $1/(1+a^2)$ or two vertical dominoes with probability $a^2/(1+a^2)$.
\end{itemize}
This procedure generates a random tiling of the Aztec diamond $A_{N+1}$ with distribution $\Prob_{N+1}$ as proven in~\cite{EKLP92}.

\subsection{Line ensembles for the Aztec diamond}
To each tiling of the Aztec diamond one can associate a set of non-intersecting lines bijectively as shown by Johansson in~\cite{Jo03}. We use a slightly different but equivalent representation. Add horizontal lines in the middle of the South-type dominoes, a 45-degrees oriented segment in the West-type dominoes and a step-down in the East-type dominoes as indicated in Figure~\ref{FigAztec3}. For a tiling of $A_N$ this results in $N$ non-intersecting lines with fixed initial and end points.
\begin{figure}
\begin{center}
\includegraphics[height=5.5cm]{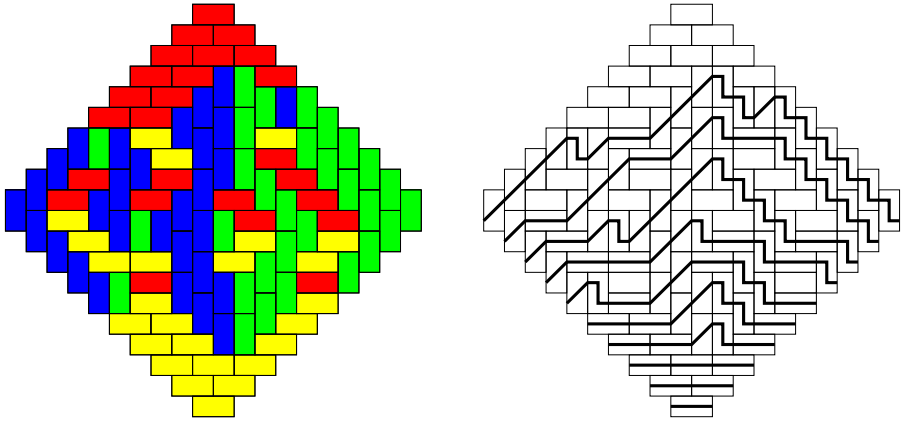}
\caption{A random tiling of an Aztec diamond of size $N=10$ and its associates set of lines. The North dominoes are red, the South are yellow, the East are green, and finally the West dominoes are blue.}
\label{FigAztec3}
\end{center}
\end{figure}

Further, one can think of the Aztec diamond of size $N$ as being embedded into tilings of $\R^2$, where outside the Aztec diamond we add only horizontal dominoes that do not overlap and fill the whole space, see the gray tiles in Figure~\ref{FigAztec4} (left).
\begin{figure}
\begin{center}
\includegraphics[height=5cm]{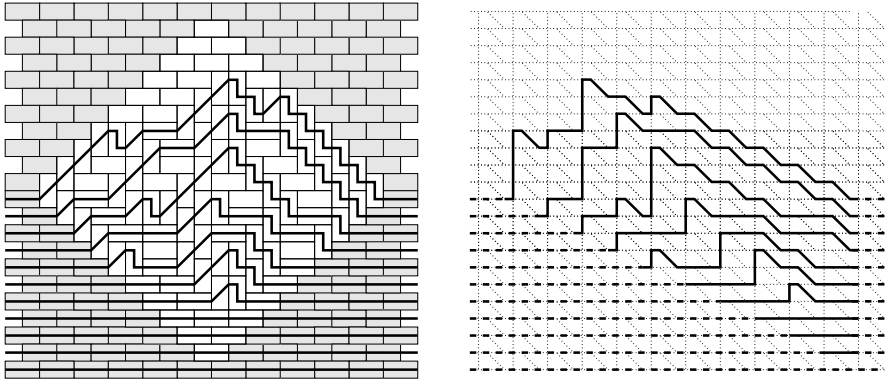}
\caption{A random tiling of an Aztec diamond of size $N=10$. Left panel: The original tiling of Figure~\ref{FigAztec3} with the lines extended deterministically outside $A_N$. Right pane: Final representation of the ensemble of non-intersecting lines. The dotted lines is the underlying LGV graph. The lines which are fixed and correspond to the gray dominoes of the left picture are dashed.}
\label{FigAztec4}
\end{center}
\end{figure}
By a simple geometric transformation, which can be easily recovered by comparing the left and the right illustrations in Figure~\ref{FigAztec4}, we obtain the final non-intersecting line ensemble representation of the random tiling of Figure~\ref{FigAztec3}.

\vspace{0.3em}
\noindent\begin{minipage}{0.83\textwidth}
\hspace{1em} The above procedure gives a bijection between a tiling of an Aztec diamond and a configuration of non-intersecting lines on a Lindstr\"om-Gessel-Viennot (LGV) directed graph. The basic building bloc of the LGV graph is the one here on the right. Then, if we consider an Aztec diamond of size $N$, we have a one-to-one bijection between the set of non-intersecting lines on the LGV graph that consists of $N$ copies of the basic building bloc and where the lines start and end at all vertical positions $-1,-2,\ldots$, see Figure~\ref{FigAztec4} for an example.
\end{minipage}
\hfill
\begin{minipage}{0.15\textwidth}
\begin{center}
\includegraphics[height=4cm]{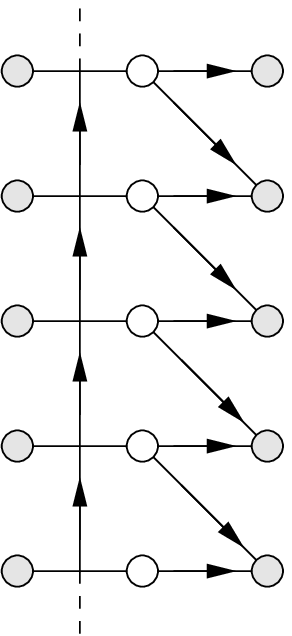}
\end{center}\vfill
\end{minipage}
\vspace{0.3em}

Now we have a bijection between a set of non-intersecting line ensembles and the set of all possible tilings of the Aztec diamond $A_N$. The next question is to know which weights we have to assign to the edges of the LGV graph in order to obtain the distribution on $A_N$ given by (\ref{eqMeasAztec}), i.e., where each vertical domino has weight $a$ and each horizontal domino has weight $1$. It is easy to see that it is enough to give weight $1$ to each horizontal edge of the LGV graph and weight $a$ to each of the vertical edges and the down-right edges. Alternatively, since the number of East and West dominoes are equal, we can also give weight $1$ to each horizontal edge, weight $\alpha$ to each vertical edge and $\beta$ to each down-right edge, provided that $\alpha \beta=a^2$.

\subsubsection*{Induced line ensembles dynamics}
Because of the bijection described above, the shuffling algorithm induces a Markov dynamics on the set of non-intersecting line ensembles. To see how the dynamics works, it might be useful to decorate the line ensembles with dominoes of the same colors as in Figure~\ref{FigAztec3}, see Figure~\ref{FigAztec5}.
\begin{figure}
\begin{center}
\psfrag{t}[c]{\small $t$}
\psfrag{0}[c]{\small $0$}
\psfrag{1}[c]{\small $1$}
\psfrag{2}[c]{\small $2$}
\psfrag{3}[c]{\small $3$}
\psfrag{4}[c]{\small $4$}
\includegraphics[height=6cm]{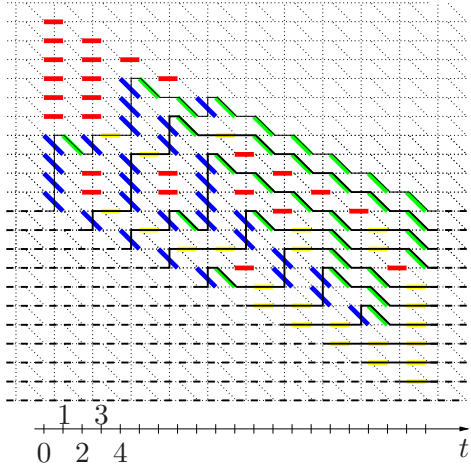}
\caption{Line ensembles decorated with domino colors as in Figure~\ref{FigAztec3}.}
\label{FigAztec5}
\end{center}
\end{figure}
The configuration at time $n=0$ consists only of straight lines at height $\{-1,-2,\ldots\}$. Given a line ensemble corresponding to an Aztec tiling of size $n$, the line ensemble of a random tiling of $A_{n+1}$ induced by the shuffling algorithm is the following (see Figure~\ref{FigShuffling} for an example):
\begin{itemize}
\item[Step 1:] Update the lines at the \emph{even} lines-times (=horizontal coordinates) $t$: their height (=vertical coordinates) is updated to be equal to the height at lines-time $t-1$. This is because by Step 1 of the shuffling algorithm, in the representation of Figure~\ref{FigAztec5}, the blue dominoes stay put, the red dominoes move vertically by $2$, the green dominoes moves to the right by $1$ and vertically by $1$, the yellow dominoes moves to the right by $1$, and when the trajectories of a red and a yellow (resp.\ green and blue) dominoes meet, they are deleted.
\item[Step 2:] Update the lines at the \emph{odd} lines-times $t\in\{1,3,\ldots,2n+1\}$: if a line can be extended vertically by one unit without generating intersection, then this happens with probability $a^2/(1+a^2)$ (it corresponds to adding a blue/green domino, the vertical ones in the Aztec representation), while the move is not made with probability $1/(1+a^2)$. All the possible updates in this step are independent of each other.
\end{itemize}

\begin{figure}
\begin{center}
\psfrag{T0}[c]{Example of lines for a random tiling of $A_1$}
\psfrag{Step1}[c]{After Step 1}
\psfrag{Step2}[c]{After Step 2}
\includegraphics[height=16cm]{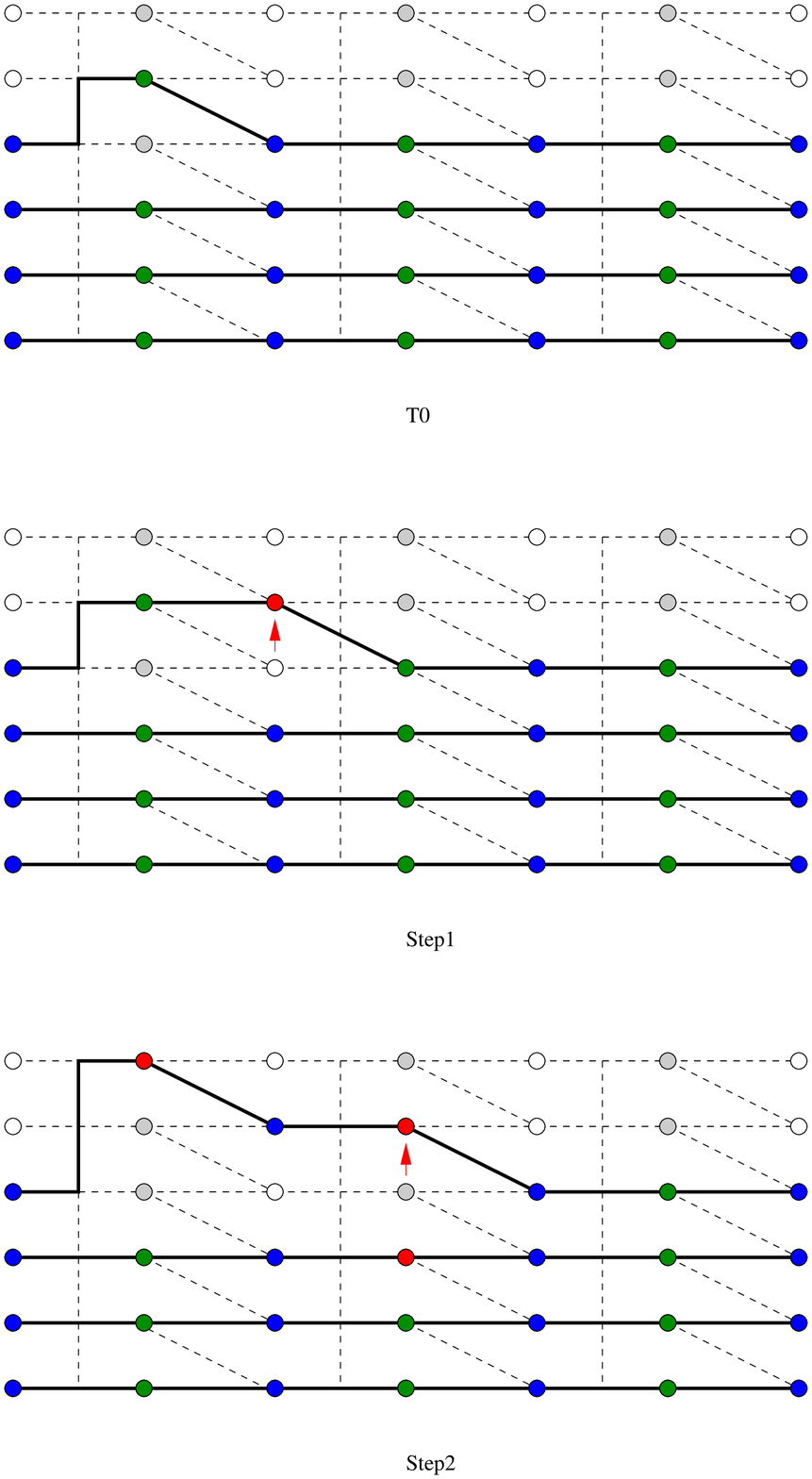}
\caption{An example of the dynamics on the non-intersecting lines. The red dots are the points which, due to the shuffling dynamics, have to move (further indicated by a small arrow) or can more. After \emph{Step 1} the set of lines does not overlaps with the LGV graphs. This is the reason why the top line in \emph{Step 2} has to further move up.}
\label{FigShuffling}
\end{center}
\end{figure}

\newpage
\section{Markov chains on interlacing particle}\label{SectAKPZ}
In this section we first introduce a class of discrete time Markov chains on interlacing particle systems. It is a special case of the more general framework developed in~\cite{BF08}. With a particular choice of transition matrices and initial conditions, we can use the particle dynamics to generate non-intersecting line ensembles with the distributions equal to the ones induced by the shuffling algorithm. For pedagogical reasons, we focus here on the situation corresponding to the Aztec diamond. Two generalizations will be presented in Section~\ref{SectGeneralization}.

\subsection{Construction of the Markov chain}

\subsubsection*{The state space}
The state space of our Markov chain consists of interlacing configurations of particles, also known as \emph{Gelfand-Tsetlin patterns}, defined as follows:
\begin{equation}
\small \mathbb{GT}_N=\left\{X^N=(x^1,\ldots,x^N), x^n=(x_1^n,\ldots,x_n^n)\in\Z^n\, |\, x^n\prec x^{n+1}, 1\leq n\leq N-1\right\},
\end{equation}
where
\begin{equation}\label{eq3.2}
x^n\prec x^{n+1}\iff x_1^{n+1} < x_1^n \leq x_2^{n+1} < x_2^{n+1}\leq \ldots < x_n^n \leq x_{n+1}^{n+1}.
\end{equation}
If $x^n\prec x^{n+1}$ we say that \emph{$x^n$ interlaces with $x^{n+1}$}. We can (and will) think of the configurations as unions of levels: so the vector $x^n$ of $X^N$ is the state at \emph{level $n$}. See Figure~\ref{figGTN} for an illustration.
\begin{figure}
\begin{center}
  \psfrag{m}[c]{$<$} \psfrag{e}[c]{$\leq$} \psfrag{x11}{$x_1^1$}
  \psfrag{x12}{$x_1^2$} \psfrag{x13}{$x_1^3$} \psfrag{x14}{$x_1^4$}
  \psfrag{x22}{$x_2^2$} \psfrag{x23}{$x_2^3$} \psfrag{x24}{$x_2^4$}
  \psfrag{x33}{$x_3^3$} \psfrag{x34}{$x_3^4$} \psfrag{x44}{$x_4^4$}
  \includegraphics[height=3.5cm]{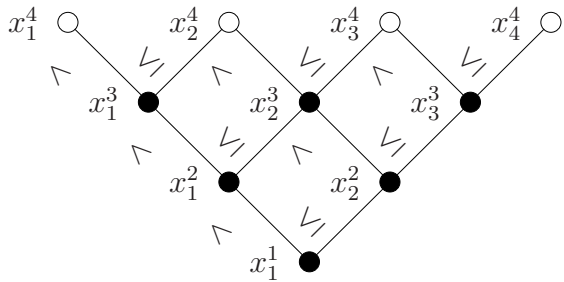}
\caption{Graphical representation of $\mathbb{GT}_4$. The white dots represents the vector at level $n=4$.}
\label{figGTN}
\end{center}
\end{figure}

The Markov chain is built up through two basic Markov chains: (a) the first one is the time evolution at a fixed level and (b) the second one is a Markov chain on $\mathbb{GT}_N$, linking level $n$ with level $n-1$, $2\leq n\leq N$. We first discuss the two chains separately and then define the full Markov chain.

\subsubsection*{Markov chain at a fixed level}
The dynamics at a fixed level is a discrete time analogue of the famous Dyson's Brownian Motion that appears in random matrices. Consider the one-particle transition matrix with entries\footnote{For a set $S$, by $\frac{1}{2\pi\I}\oint_{\Gamma_S} \D w f(w)$ we mean the contour integral where the contour can be taken to be any anticlockwise oriented simple path containing all the points in the set $S$ but no other singularities of the function $f$.}
\begin{equation}\label{eq2.11}
{\bf P}(x,y)= \frac{1}{2\pi\I}\oint_{\Gamma_0} \D w \frac{1-p+p w^{-1}}{w^{x-y+1}}
=\left\{
         \begin{array}{ll}
           p, & \textrm{if }y=x+1, \\
           1-p, & \textrm{if }y=x, \\
           0, & \textrm{otherwise}.
         \end{array}
       \right.
\end{equation}
Then, the one-particle transition probability is given by
\begin{equation}
p_t(x)=({\bf P}^t)(0,x)=\binom{t}{x}p^x (1-p)^{t-x}\Id_{[0\leq x\leq t]}
\end{equation}
for $t\in\N$.

For any fixed $n\in \{1,\ldots,N\}$, let us define an $n$-particle process on the projection of the state space $\mathbb{GT}_N$ to the level-$n$, that is the
Weyl chamber $W_n$,
\begin{equation}
W_n=\{x=(x_1,\ldots,x_n)\in\Z^n\, | \, x_1<x_2<\ldots<x_n\}.
\end{equation}
Define also the boundary of the Weyl-chamber by
\begin{equation}
\partial W_n=\{x=(x_1,\ldots,x_n)\in\Z^n\, | \, x_i=x_{i+1} \textrm{ for some }i\in \{1,\ldots,n-1\}\}´.
\end{equation}
Consider first the process on \mbox{$x^n=(x_1^n,\ldots,x_n^n)\in\Z^n$} where each of the $x_k^n$, $1\leq k \leq n$, are independent processes with transition matrix $\bf P$ given above. The (discrete time) generator of the random walk is
\begin{equation}
(L_n f)(x)=\sum_{i=1}^n p(f(x+e_i)-f(x)),
\end{equation}
with $e_i$ the vector with entries $e_i(j)=\delta_{i,j}$. We want to condition the random walk $x^n$ on never having a collision between any two of its components, i.e., to stay in the Weyl chamber forever. The conditioning can be achieved by the Doob $h$-transform with $h$ given by the Vandermonde determinant, see e.g.~\cite{OCon02}. Indeed, one can verify that
\begin{equation}\label{eq2.5}
h_n(x)=\prod_{1\leq i < j \leq n} (x_j-x_i)\equiv \Delta_n(x)
\end{equation}
is harmonic and vanishes at $\partial W_n$, i.e., $L_n h_n = 0$. Then, the conditioned process given that the walker never leaves the Weyl chamber $W_n$ is the Doob $h$-transform of the free walk. More precisely, for $x,y\in W_n$ and $t\in\N$, the transition probability $P_{n,t}$ from $x$ to $y$ of the conditioned random walk is given by
\begin{equation}
P_{n,t}(x,y)=\frac{h_n(y)}{h_n(x)}\Pb(x^n(t)=y, T>t\,|\, x^n(0)=x),
\end{equation}
where $T=\min\{t\geq\,0 |\, x^n(t)\not\in W_n\}$. Since our random walks are one-sided, $x^n$ can exit the Weyl-chamber only through $\partial W_n$. Therefore using Karlin-McGregor's type formula (or, since we are in discrete time, the LGV theorem), we have
\begin{equation}
\Pb(x^n(t)=y, T>t\,|\, x^n(0)=x) = \det\left(p_t(y_i-x_j)\right)_{i,j=1}^n.
\end{equation}
Consequently, we define the level-$n$ chain as follows.
\begin{defin}\label{DefDiscreteCharlier}
The transition probability of the Markov chain at level $n$ is given by
\begin{equation}\label{eq2.14}
P_{n}(x,y)=\frac{\Delta_n(y)}{\Delta_n(x)}\det\left({\bf P}(x_i,y_j)\right)_{i,j=1}^n
\end{equation}
with $\bf P$ as in (\ref{eq2.11}), $x,y\in W_n$, and $\Delta_n$ the Vandermonde determinant.
\end{defin}

\subsubsection*{Markov chain at a fixed time}
Now we consider a Markov link between levels of $\mathbb{GT}_N$ that generates the uniform measure on $\mathbb{GT}_N$ given the value $x^N$ of the level $N$, i.e., with \mbox{$x^N=(x_1^N<x_2^N<\ldots<x_N^N)$} fixed. It is well-known, see e.g.\ Corollary~\ref{CorA4}, that
\begin{equation}\label{eq2.12}
\textrm{\# of $\mathbb{GT}_N$ patterns with given $x^N$}=\prod_{1\leq i < j \leq N}\frac{x^N_j-x_i^N}{j-i} = \frac{\Delta_N(x^N)}{\prod_{n=1}^{N-1}n!}.
\end{equation}
Thus, the uniform measure on $\mathbb{GT}_N$ given $x^N$ can be obtained by setting
\begin{equation}
\Pb(x^{N-1}\, | \, x^N)=\frac{\textrm{\# of $GT_{N-1}$ patterns with given $x^{N-1}$}}{\textrm{\# of $\mathbb{GT}_N$ patterns with given $x^N$}}\Id_{[x^{N-1}\prec x^N]}.
\end{equation}
Using (\ref{eq2.12}) we obtain
\begin{equation}
\Pb(x^{N-1}\, | \, x^N) = (N-1)! \frac{\Delta_{N-1}(x^{N-1})}{\Delta_N(x^N)}\Id_{[x^{N-1}\prec x^N]}.
\end{equation}
Consequently, we define the following.
\begin{defin}\label{DefMarkovLink}
For any $n\in\{2,\ldots,N\}$, define the Markov link between level $n$ and $n-1$ by
\begin{equation}\label{eqMarkovLink}
\Lambda^n_{n-1}(x^n,x^{n-1}):=(n-1)!\frac{\Delta_{n-1}(x^{n-1})}{\Delta_n(x^n)}\Id_{[x^{n-1}\prec x^n]},
\end{equation}
where $x^n\in W_n$ and $x^{n-1}\in W_{n-1}$.
\end{defin}
The uniform measure on $\mathbb{GT}_N$ given $x^N$ can be expressed by
\begin{equation}\label{eq2.13}
\frac{\prod_{n=1}^{N-1}n!}{\Delta_N(x^N)} \Id_{[x^1\prec x^2\prec\ldots\prec x^N]} =\prod_{n=2}^N \Lambda^n_{n-1}(x^n,x^{n-1}).
\end{equation}

There is an important representation of the interlacing through determinants. This is relevant when doing concrete computations.
\begin{lem}\label{LemInterlacing}
Let $x^N\in W_N$ and $x^{N-1}\in W_{N-1}$ be ordered configurations in the Weyl chambers. Then, setting $x_N^{N-1}\equiv {\rm virt}$ a ``virtual variable'', we have
\begin{equation}\label{eqInterlacingDeterminants}
\Id_{[x^{N-1}\prec x^N]} = \det(\phi(x_j^N,x_i^{N-1}))_{i,j=1}^N,
\end{equation}
with $\phi(x,y)=\Id_{[y>x]}=\frac{1}{2\pi\I}\oint_{\Gamma_0} \D w \frac{(1-w)^{-1}}{w^{y-x}}$ and $\phi(x,{\rm virt})=1$.
\end{lem}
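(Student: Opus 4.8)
The plan is to compute the $N\times N$ determinant on the right-hand side of (\ref{eqInterlacingDeterminants}) directly, turning it into a permutation matrix by elementary row operations that exploit the monotonicity of the two configurations. First I would write out the matrix $M=(\phi(x_j^N,x_i^{N-1}))_{i,j=1}^N$ explicitly. For $1\le i\le N-1$ the $i$-th row is $M_{ij}=\Id_{[x_i^{N-1}>x_j^N]}$, whereas the last row, produced by the virtual variable, is $M_{Nj}=\phi(x_j^N,\mathrm{virt})=1$ for all $j$. Since $x_1^N<\cdots<x_N^N$ is strictly increasing, each of the first $N-1$ rows is a block of $1$'s followed by $0$'s: setting $k_i:=\#\{j:x_j^N<x_i^{N-1}\}$, row $i$ has $1$'s exactly in columns $1,\ldots,k_i$. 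Because $x^{N-1}\in W_{N-1}$ is also strictly increasing, the sequence $k_1\le k_2\le\cdots\le k_{N-1}$ is non-decreasing, and it is convenient to set $k_0:=0$ and $k_N:=N$, the latter matching the all-ones last row.

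Next I would apply, from the bottom upwards for $i=N,N-1,\ldots,2$, the determinant-preserving operation of replacing row $i$ by (row $i$)$-$(row $i-1$); this is left multiplication by a unipotent lower-triangular matrix, so the determinant is unchanged. Because the support $\{1,\ldots,k_{i-1}\}$ of row $i-1$ sits inside the support $\{1,\ldots,k_i\}$ of row $i$, the new row $i$ becomes the indicator of the consecutive block of columns $I_i:=\{k_{i-1}+1,\ldots,k_i\}$. These blocks $I_1,\ldots,I_N$ abut and run from $1$ to $k_N=N$, hence they partition $\{1,\ldots,N\}$, so the transformed matrix $\tilde M$ carries exactly one $1$ in each column. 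Therefore $\det M=\det\tilde M$ is nonzero if and only if $\tilde M$ is a permutation matrix, i.e.\ if and only if every block $I_i$ is a singleton; this forces $k_i-k_{i-1}=1$ for all $i$, equivalently $k_i=i$ for every $i$, in which case $\tilde M$ is the identity and $\det M=1$.

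It remains to identify $k_i=i$ with the interlacing condition, which is the step I would be most careful about. Unpacking $k_i=i$ means exactly $i$ of the increasing numbers $x_j^N$ lie strictly below $x_i^{N-1}$, which says precisely $x_i^N<x_i^{N-1}\le x_{i+1}^N$; letting $i$ range over $1,\ldots,N-1$ reproduces exactly the chain $x_1^N<x_1^{N-1}\le x_2^N<\cdots<x_{N-1}^{N-1}\le x_N^N$ that defines $x^{N-1}\prec x^N$ in (\ref{eq3.2}), while the index $i=N$ is automatic from $k_N=N$, which is exactly the bookkeeping performed by the virtual variable. Conversely, if interlacing fails then the $(k_i)_{i=0}^N$ are not all forced to be $0,1,\ldots,N$; since a strictly increasing sequence $k_0<k_1<\cdots<k_N$ of integers in $\{0,\ldots,N\}$ would have to equal $0,1,\ldots,N$ (there is no room otherwise), failure of interlacing must produce a repeated value $k_{i-1}=k_i$, hence an empty block $I_i$, a zero row of $\tilde M$, and $\det M=0$. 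I expect this last dichotomy — strict increase is impossible without full interlacing, so any deviation collapses a row — to be the only point requiring a careful counting argument; everything else is routine linear algebra.
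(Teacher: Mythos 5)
Your argument is correct and complete, but it is organized differently from the paper's. The paper disposes of the lemma in two sentences: when $x^{N-1}\prec x^N$ the matrix is triangular with $1$'s on the diagonal (which is exactly your observation that $k_i=i$ makes row $i$ the indicator of columns $1,\ldots,i$), and when interlacing fails one finds two equal rows or two equal columns, so the determinant vanishes. Your row-reduction replaces the second observation by a uniform mechanism: subtracting consecutive rows turns the matrix into the column-indicator of the blocks $I_i=\{k_{i-1}+1,\ldots,k_i\}$, and the pigeonhole count on the non-decreasing sequence $0=k_0\le k_1\le\cdots\le k_N=N$ shows that any failure of interlacing forces an empty block, i.e.\ a zero row. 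This buys you a single argument treating both cases at once, and it makes explicit the case analysis the paper leaves implicit: when interlacing fails one does not always get two equal \emph{rows} (for instance if $k_1=0$ and the remaining $k_i$ are all distinct), and checking that one of the paper's two alternatives --- equal rows or equal columns --- always occurs requires essentially the same counting you carry out. The price is length; both proofs are elementary and rest on the same initial observation that each row of the matrix is the indicator of an initial segment of columns, with the virtual variable supplying the all-ones row that anchors $k_N=N$.
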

\begin{proof}
The proof is quite easy. For $x^{N-1}\prec x^N$ one sees that the matrix on the r.h.s.~of (\ref{eqInterlacingDeterminants}) is triangular with $1$ on the diagonal. Further, by violating the interlacing conditions, one gets two rows or columns which are equal, so that the determinant is equal to zero.
\end{proof}

\subsubsection{The intertwining condition}
The key property used in our dynamics on $\mathbb{GT}_N$ is the \emph{intertwining relation}, illustrated in Figure~\ref{figIntertwining}, namely
\begin{equation}\label{eqIntertwining}
\Delta^{n}_{{n-1}}:=P_{n} \Lambda^n_{n-1} = \Lambda^n_{n-1} P_{n-1},\quad 2\leq n \leq N.
\end{equation}
In our specific case, to see that (\ref{eqIntertwining}) holds one uses the Fourier representation for $P_n$ (see (\ref{eq2.11})) and of $\Lambda^n_{n-1}$ (see Lemma~\ref{LemInterlacing}). The intertwining relation can be obtained quite generically when the transition matrices are translation invariant, see Proposition 2.10 of~\cite{BF08} or Appendix~\ref{AppToeplitz} for more details. Intertwining Markov chains were first studied in~\cite{DF90}.
\begin{figure}
\begin{center}
  \psfrag{L}[l]{$\Lambda^{n}_{n-1}$}
  \psfrag{Pn}[c]{$P_n$}
  \psfrag{Pnm1}[c]{$P_{n-1}$}
  \psfrag{Sn}[c]{$W_n$}
  \psfrag{Snm1}[c]{$W_{n-1}$}
  \psfrag{C}[c]{$\circlearrowleft$}
  \includegraphics[height=3.5cm]{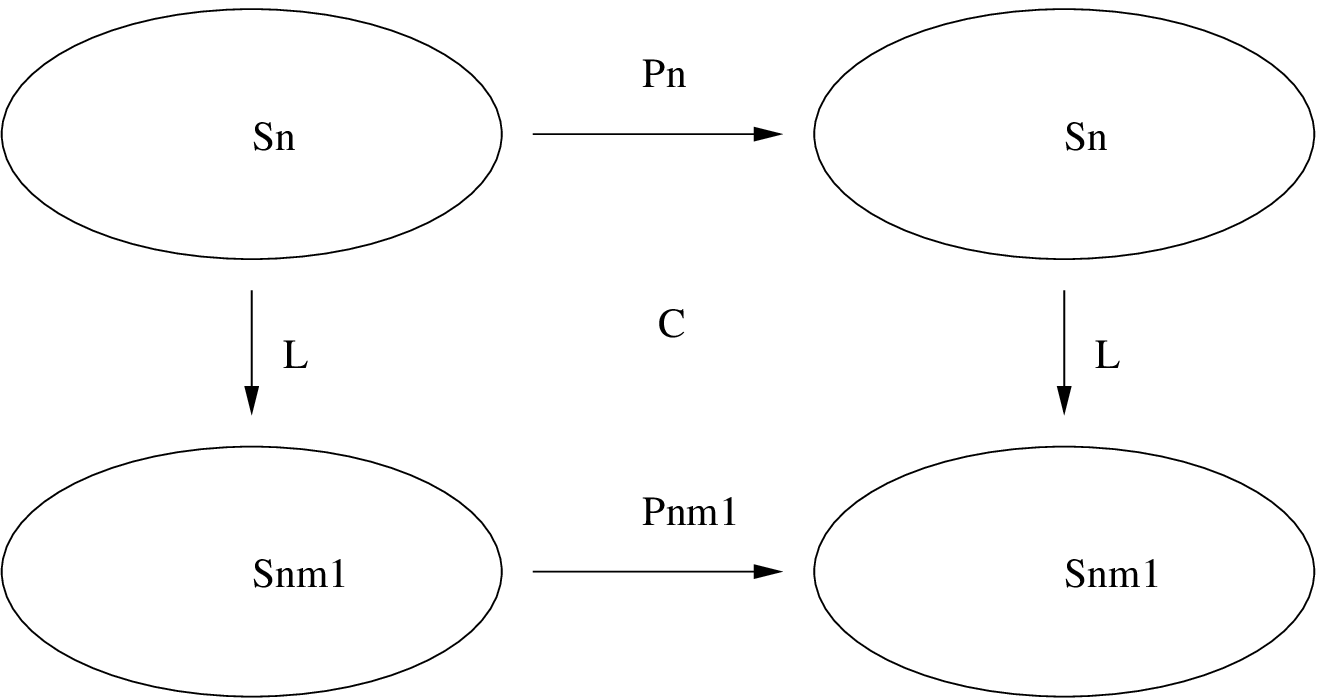}
\caption{Intertwining of the Markov chains.}
\label{figIntertwining}
\end{center}
\end{figure}

\subsubsection{The sequential update dynamics}
Now we define a discrete time Markov chain on $\mathbb{GT}_N$ that we call \emph{sequential update}, since the update occurs level by level and for the update of level $n$ we use the updated values of level $n-1$. In particular, the projection to the particles $\{x_1^1,x_1^2,\ldots\}$ is the sequential update dynamics of the totally asymmetric simple exclusion process (TASEP).

Define the transition probabilities of a Markov chain on $\mathbb{GT}_N$ by (we use the notation $X^N(t)=(x^1(t),\ldots,x^N(t))$)
\begin{multline}\label{eqMConGTn}
P_{\Lambda}^N(X^N(t),X^N(t+1))
=P_{1}(x^1(t),x^1(t+1))\\
\times \prod_{k=2}^N\frac{P_{k}(x^k(t),x^k(t+1))\Lambda^k_{k-1}(x^k(t+1),x^{k-1}(t+1))}{\Delta^k_{k-1}(x^k(t),x^{k-1}(t+1))},
\end{multline}
for $X^N(t),X^N(t+1)\in \mathbb{GT}_N$, see Figure~\ref{FigSequential} for an illustration.

One can think of $P_{\Lambda}^N$ as follows. Starting from $X^N(t)=(x^1(t),\ldots,x^N(t))$, we first choose $x^1(t+1)$ according to the transition matrix \mbox{$P_{1}(x^1(t),x^1(t+1))$}, then choose $x^2(t+1)$ using $\frac{P_{2}(x^2(t),x^2(t+1))\Lambda^2_{1}(x^2(t+1),x^{1}(t+1))}{\Delta^2_{1}(x^2(t),x^{1}(t+1))}$, which is the conditional distribution of the middle point in the successive application of $P_{2}$ and $\Lambda^2_1$, provided that we start at $x^2(t)$ and finish at $x^1(t+1)$. After that we choose $x^3(t+1)$ using the conditional distribution of the middle point in the successive application of $P_{3}$ and $\Lambda^3_2$ provided that we start at $x^3(t)$ and finish at $x^2(t+1)$, and so on.

With our choice of $P_n$'s and $\Lambda^n_{n-1}$'s introduced above, the dynamics is the following:
\begin{itemize}
  \item $x_1^1$ just performs a one-sided random walk (with jump probability~$p$).
  \item $x_1^2$ performs a one-sided random walk but the jumps leading to $x_1^2=x_1^1$ are suppressed (we say that $x_1^2$ is blocked by $x_1^1$).
  \item $x_2^2$ performs a one-sided random walk but the jumps leading to $x_2^2=x_1^1$ are forced to happen (we say that $x_2^2$ is pushed by $x_1^1$).
  \item Generally, $x_k^n$ performs a one-sided random walk (with jump probability~$p$), except that it is blocked by $x_k^{n-1}$ and is pushed by $x_{k-1}^{n-1}$ (whenever they exists). The updates are made from lower to upped levels.
\end{itemize}

\begin{figure}
\begin{center}
  \psfrag{L32}[l]{$\Lambda^{3}_{2}$}
  \psfrag{L21}[l]{$\Lambda^{2}_{1}$}
  \psfrag{P3}[c]{$P_3$}
  \psfrag{P2}[c]{$P_2$}
  \psfrag{P1}[c]{$P_1$}
  \psfrag{X3}[c]{$x^3(t)$}
  \psfrag{X2}[c]{$x^2(t)$}
  \psfrag{X1}[c]{$x^1(t)$}
  \psfrag{Y3}[c]{$x^3(t+1)$}
  \psfrag{Y2}[c]{$x^2(t+1)$}
  \psfrag{Y1}[c]{$x^1(t+1)$}
  \includegraphics[height=4.5cm]{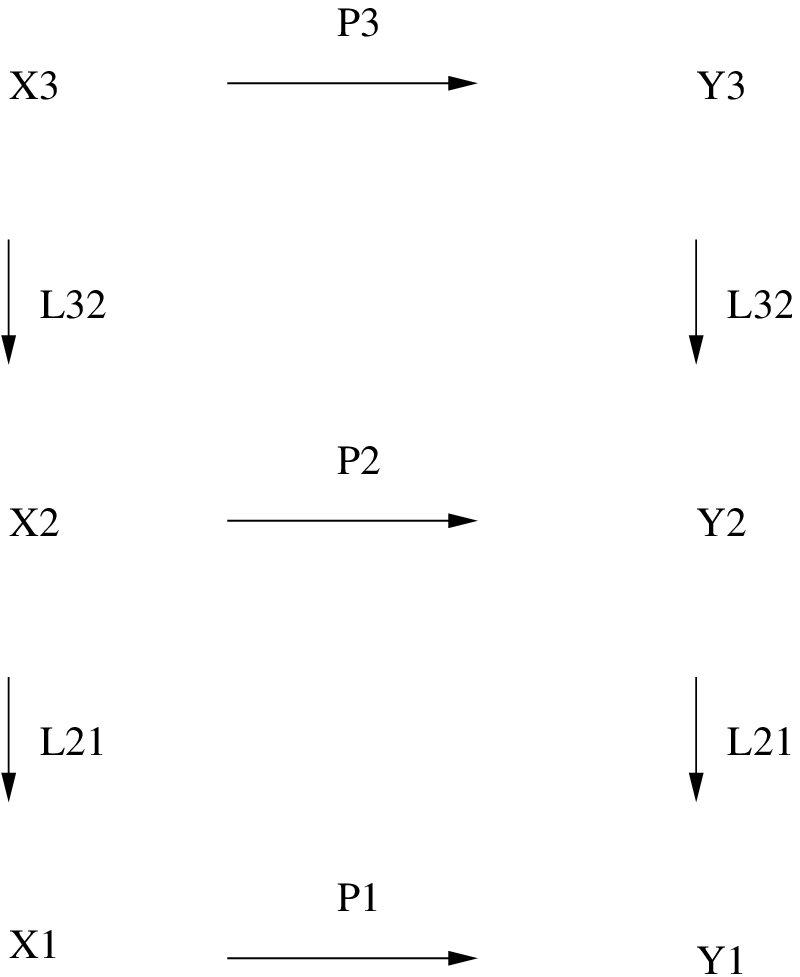}
\caption{Scheme for the sequential update dynamics.}
\label{FigSequential}
\end{center}
\end{figure}

\subsubsection*{Conserved measures}
Given the Markov chain on $\mathbb{GT}_N$ described above, it is of interest to know which class of measures are conserved by the time evolution.
\begin{prop}[See Proposition 2.5 of~\cite{BF08}]\label{PropConservationMeasureType}
Let $\mu_N(x^N)$ a probability measure on $W_N$. Consider the evolution of the measure
\begin{equation}\label{eq2.20}
M_N(X^N)=\mu_N(x^N)\Lambda^N_{N-1}(x^N,x^{N-1})\Lambda^{N-1}_{N-2}(x^{N-1},x^{N-2})\cdots\Lambda^2_1(x^2,x^1)
\end{equation}
on $\mathbb{GT}_N$ under the Markov chain $P^N_{\Lambda}$. Then the measure at time $t$ is given by
\begin{multline}
(M_N \underbrace{P^N_{\Lambda}\cdots P^N_{\Lambda}}_{t\textrm{ times}})(\tilde X^N)\\
=(\mu_N \underbrace{P_N\cdots P_N}_{t\textrm{ times}})(\tilde x^N)\Lambda^N_{N-1}(\tilde x^N,\tilde x^{N-1})\Lambda^{N-1}_{N-2}(\tilde x^{N-1},\tilde x^{N-2})\cdots\Lambda^2_1(\tilde x^2,\tilde x^1).
\end{multline}
\end{prop}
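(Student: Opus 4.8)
The plan is to prove the statement by induction on $t$, for which it suffices to establish the single-step claim: if $M_N$ has the form (\ref{eq2.20}) built from a probability measure $\mu_N$ on $W_N$, then $M_N P^N_\Lambda$ again has that form, with $\mu_N$ replaced by $\mu_N P_N$. Once this is known, applying $P^N_\Lambda$ a total of $t$ times replaces $\mu_N$ by $\mu_N P_N^t$ while leaving the product of links $\prod_{k=2}^N \Lambda^k_{k-1}$ untouched; since $P_N$ is stochastic, $\mu_N P_N^s$ remains a probability measure on $W_N$ at every stage, so the inductive hypothesis keeps applying and the full statement follows.

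To prove the single step I would write out $(M_N P^N_\Lambda)(\tilde X^N)=\sum_{X^N\in\mathbb{GT}_N} M_N(X^N)\,P^N_\Lambda(X^N,\tilde X^N)$ fully using (\ref{eq2.20}) and (\ref{eqMConGTn}), and then carry out the summations over the source variables $x^1,x^2,\dots,x^{N-1}$ one level at a time, from the bottom up. The engine is a telescoping driven by the intertwining relation (\ref{eqIntertwining}). At the first step the only factors depending on $x^1$ are the link $\Lambda^2_1(x^2,x^1)$ coming from $M_N$ and the factor $P_1(x^1,\tilde x^1)$ coming from $P^N_\Lambda$; summing them gives $(\Lambda^2_1 P_1)(x^2,\tilde x^1)=\Delta^2_1(x^2,\tilde x^1)$ by (\ref{eqIntertwining}), which exactly cancels the denominator $\Delta^2_1(x^2,\tilde x^1)$ in the $k=2$ factor of (\ref{eqMConGTn}) and leaves behind $P_2(x^2,\tilde x^2)$, together with the numerator link $\Lambda^2_1(\tilde x^2,\tilde x^1)$ now depending only on target variables.

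The same pattern then repeats: when summing over $x^{k-1}$, the $x^{k-1}$-dependent factors are $\Lambda^k_{k-1}(x^k,x^{k-1})$ from $M_N$ and the leftover $P_{k-1}(x^{k-1},\tilde x^{k-1})$ produced at the previous level, whose sum $(\Lambda^k_{k-1}P_{k-1})(x^k,\tilde x^{k-1})=\Delta^k_{k-1}(x^k,\tilde x^{k-1})$ cancels the $k$-th denominator and passes a fresh factor $P_k(x^k,\tilde x^k)$ up to the next level while accumulating $\Lambda^k_{k-1}(\tilde x^k,\tilde x^{k-1})$. After $x^1,\dots,x^{N-1}$ have all been summed out, the surviving expression is $\mu_N(x^N)\,P_N(x^N,\tilde x^N)\prod_{k=2}^N \Lambda^k_{k-1}(\tilde x^k,\tilde x^{k-1})$, and the final summation over $x^N$ produces $(\mu_N P_N)(\tilde x^N)$, which closes the single step and hence the induction.

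The telescoping itself is routine once the bookkeeping of which factor carries which variable is kept straight; the point that needs genuine care is that the links $\Lambda^k_{k-1}$ carry the interlacing indicators $\Id_{[x^{k-1}\prec x^k]}$ and the kernels $P_k$ are supported on the Weyl chambers, so I would verify that each partial summation runs over the correct (automatically restricted) range and that (\ref{eqIntertwining}) is being invoked as a genuine matrix identity on $W_k\times W_{k-1}$. The remaining technical point is that (\ref{eqMConGTn}) is only well defined where the denominators $\Delta^k_{k-1}(x^k,\tilde x^{k-1})$ are nonzero; I would note that on such configurations the cancellations above are exact, while configurations with a vanishing denominator carry no mass and therefore do not affect the identity.
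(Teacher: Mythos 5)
Your proof is correct and is exactly the ``elementary'' argument the paper alludes to: the paper gives no details beyond noting that the intertwining relation (\ref{eqIntertwining}) is the key ingredient, and your bottom-up telescoping over $x^1,\dots,x^{N-1}$, with the summed link-times-kernel producing $\Delta^k_{k-1}$ to cancel the denominator of (\ref{eqMConGTn}), is precisely how Proposition 2.5 of~\cite{BF08} is proved. Your remarks on the vanishing-denominator configurations and on $\mu_N P_N^s$ remaining a probability measure are the right bookkeeping points and are consistent with the conventions of~\cite{BF08}.
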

For this statement to hold we use the intertwining property (\ref{eqIntertwining}). The proof is elementary.

An example of measure of the form of Proposition~\ref{PropConservationMeasureType} is the \emph{packed initial conditions}
\begin{equation}\label{eqPackedIC}
x^n_k(t=0)=-n+k-1, \quad 1\leq k\leq n\leq N.
\end{equation}
This initial condition is, due to the interlacing enforced by the $\Lambda^n_{n-1}$'s, obtained by setting $\mu_N(x^N)=\delta_{x^N,(-N,\ldots,-1)}$. It is interesting to note that such a measure has a determinantal form.
\begin{lem}[See proof of Theorem 2.25 of~\cite{BF08}]\label{lemPackedIC}
Consider the probability measure on $W_N$ given by
\begin{equation}
\mu_N(x^N)=\const\, \Delta_N(x^N) \det(\Psi_i(x_j^N))_{i,j=1}^N
\end{equation}
where
\begin{equation}\label{eq3.35}
\Psi_i(x)=\frac{1}{2\pi\I}\oint_{\Gamma_0}\D z (1-z)^{N-i}z^{x+i-1},\quad 1\leq i\leq N.
\end{equation}
Then $\mu_N(x^N)=\delta_{x^N,(-N,\ldots,-1)}$.
\end{lem}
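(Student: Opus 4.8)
The plan is to reduce everything to an explicit evaluation of $\Psi_i$ together with a counting argument. First I would compute $\Psi_i(x)$ from its contour-integral definition. Since $N-i\ge 0$, the factor $(1-z)^{N-i}$ is a polynomial, so the integrand $(1-z)^{N-i}z^{x+i-1}$ has its only singularity at $z=0$, and the integral over $\Gamma_0$ equals the coefficient of $z^{-1}$ in its Laurent expansion. Expanding by the binomial theorem and extracting that coefficient yields
\begin{equation}
\Psi_i(x)=(-1)^{x+i}\binom{N-i}{-x-i}\quad\text{for }-N\le x\le -i,
\end{equation}
and $\Psi_i(x)=0$ otherwise; in particular $\supp\Psi_i=\{-N,-N+1,\dots,-i\}\subseteq\{-N,\dots,-1\}$.

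Next I would use this support property to localize the measure. If some coordinate $x_j^N$ lies outside $\{-N,\dots,-1\}$, then $\Psi_i(x_j^N)=0$ for every $i$, so the $j$-th column of the matrix $(\Psi_i(x_j^N))_{i,j=1}^N$ vanishes and the determinant is zero. Hence $\mu_N$ can be nonzero only if every $x_j^N\in\{-N,\dots,-1\}$; since $x_1^N<\cdots<x_N^N$ are $N$ distinct integers and $|\{-N,\dots,-1\}|=N$, this forces $x_j^N=-N+j-1$, i.e. $x^N=(-N,\dots,-1)$. Thus $\mu_N$ is automatically supported on the single packed configuration.

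It then remains to check that the weight there is nonzero. Substituting $x_j^N=-N+j-1$, the condition $\Psi_i(x_j^N)\neq 0$ reads $i+j\le N+1$, so the matrix is anti-triangular, and on the anti-diagonal $i+j=N+1$ one has $\Psi_i(-i)=\binom{N-i}{0}=1$. Only the order-reversing permutation survives, giving $\det(\Psi_i(x_j^N))=(-1)^{N(N-1)/2}$, while $\Delta_N(x^N)=\prod_{1\le i<j\le N}(j-i)=\prod_{n=1}^{N-1}n!>0$. The unnormalized weight at the packed configuration is therefore nonzero; since $\mu_N$ is a probability measure supported on this single point, it must equal $\delta_{x^N,(-N,\dots,-1)}$, with the normalizing constant forced to be $(-1)^{N(N-1)/2}/\prod_{n=1}^{N-1}n!$.

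The main obstacle is the first step: carrying out the residue computation cleanly and, above all, pinning down the \emph{exact} support $\{-N,\dots,-i\}$ of $\Psi_i$, on which the whole argument rests. Once the support is identified, the pigeonhole localization of the measure and the anti-triangular determinant evaluation are routine.
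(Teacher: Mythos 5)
Your proof is correct and complete: the residue evaluation $\Psi_i(x)=(-1)^{x+i}\binom{N-i}{-x-i}$ for $-N\le x\le -i$ (and $0$ otherwise), the pigeonhole localization onto the packed configuration, and the anti-triangular determinant evaluation all check out. The paper itself gives no proof of this lemma, deferring to the proof of Theorem 2.25 of~\cite{BF08}, and your direct verification is precisely the standard argument used there, so there is nothing substantive to compare beyond noting that your write-up is a valid self-contained substitute for the citation.
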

This statement is a key for a more difficult Theorem~\ref{ThmAztecMeasure} below.

\subsubsection{The parallel update dynamics}
Now we define a discrete time Markov chain that we call \emph{parallel update}, since the update occurs in parallel for each level and for the update of level $n$ we use the values of level $n-1$ at the previous time instant. In particular, the projection to the particles $\{x_1^1,x_2^2,\ldots\}$ is the parallel update dynamics of the totally asymmetric simple exclusion process (TASEP).

The state space of the Markov chain is
\begin{equation}
\mathbb{S}_N=\{x^1\in W_1,\ldots,x^n\in W_n\,|\, \Delta^k_{k-1}(x^{k-1},x^k)>0, 2\leq k \leq N\}.
\end{equation}
In our case, $\mathbb{S}_N$ is as $\mathbb{GT}_N$ except that now the particles are weakly interlacing, i.e., the strict inequalities in (\ref{eq3.2}) have to be replaced by inequalities. There is an analogue of Lemma~\ref{LemInterlacing} (that we are not using here), see~\cite{BFS07b}. Define the transition probabilities of a Markov chain on $\mathbb{S}_N$ by (we use the notation $X^N(t)=(x^1(t),\ldots,x^N(t))$)
\begin{multline}\label{eqMConGTnDelta}
P_{\Delta}^N(X^N(t),X^N(t+1))
=P_{1}(x^1(t),x^1(t+1))\\
\times\prod_{k=2}^N\frac{P_{k}(x^k(t),x^k(t+1))\Lambda^k_{k-1}(x^k(t+1),x^{k-1}(t))}{\Delta^k_{k-1}(x^k(t),x^{k-1}(t))},
\end{multline}
for $X^N(t),X^N(t+1)\in \mathbb{S}_N$, see Figure~\ref{FigParallel} for an illustration.

With our choice of $P_n$'s and $\Lambda^n_{n-1}$'s introduced above, the dynamics is similar to the one in sequential update: $x_k^n$ performs a one-sided random walk (with jump probability $p$), with the constraints that $x_k^{n}$ is blocked by $x_k^{n-1}$ and is pushed by $x_{k-1}^{n-1}$. However, this time, the update of the particles is parallel, i.e., first one checks which particles could jump because they are not blocked and then these free particles independently jump to their right with probability $p$, pushing, when required by interlacing, other particles at higher levels.
\begin{figure}
\begin{center}
  \psfrag{L32}[l]{$\Lambda^{3}_{2}$}
  \psfrag{L21}[l]{$\Lambda^{2}_{1}$}
  \psfrag{P3}[c]{$P_3$}
  \psfrag{P2}[c]{$P_2$}
  \psfrag{P1}[c]{$P_1$}
  \psfrag{X3}[c]{$x^3(t)$}
  \psfrag{X2}[c]{$x^2(t)$}
  \psfrag{X1}[c]{$x^1(t)$}
  \psfrag{Y3}[c]{$x^3(t+1)$}
  \psfrag{Y2}[c]{$x^2(t+1)$}
  \psfrag{Y1}[c]{$x^1(t+1)$}
  \includegraphics[height=4.5cm]{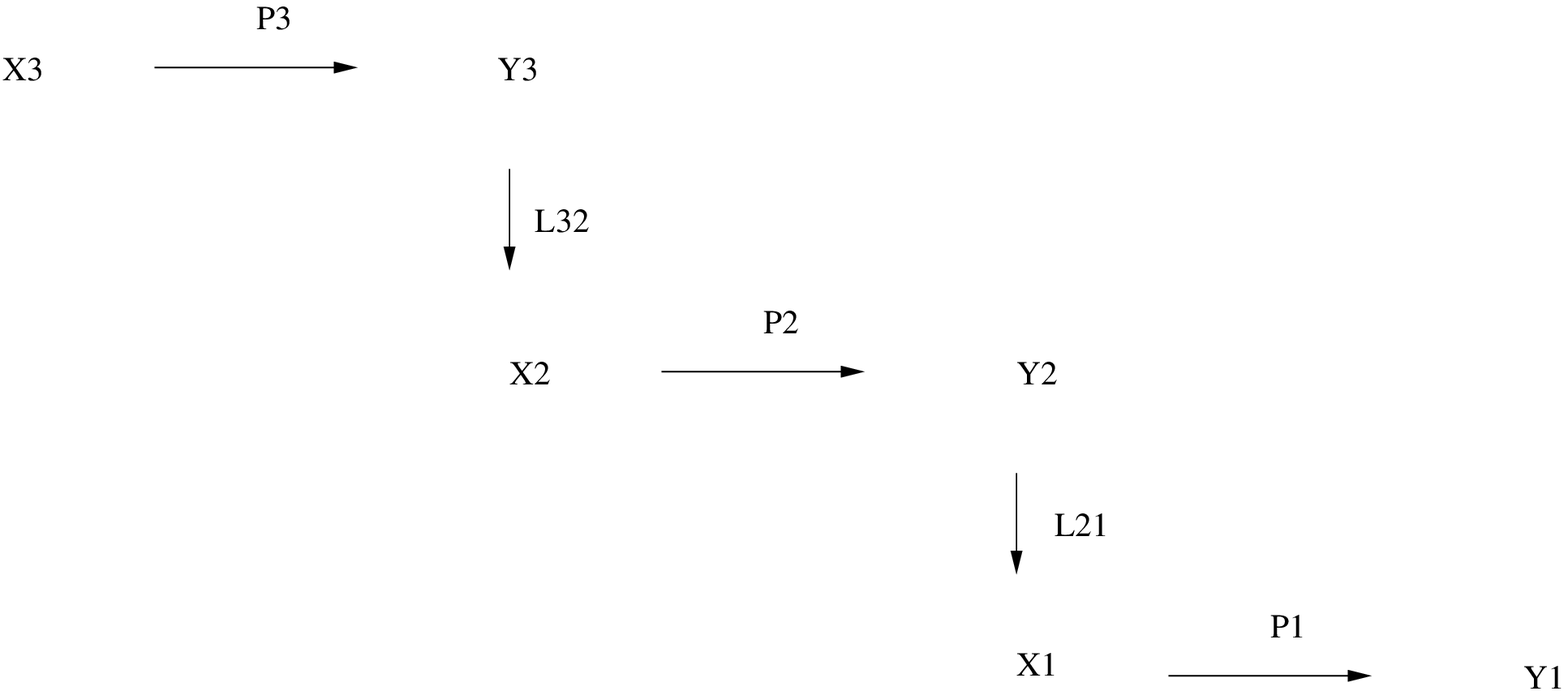}
\caption{Scheme for the parallel update dynamics.}
\label{FigParallel}
\end{center}
\end{figure}

The analogue of Proposition~\ref{PropConservationMeasureType} for the Markov chain on $\mathbb{S}_N$ with parallel update is the following.
\begin{prop}[See Proposition 2.5 of~\cite{BF08}]\label{PropConservationMeasureTypeParallel}
Let $\mu_N(x^N)$ a probability measure on $W_N$. Consider the evolution of the measure
\begin{equation}\label{eq2.20B}
M_N(X^N)=\mu_N(x^N)\Delta^N_{N-1}(x^N,x^{N-1})\Delta^{N-1}_{N-2}(x^{N-1},x^{N-2})\cdots\Delta^2_1(x^2,x^1)
\end{equation}
on $\mathbb{S}_N$ under the Markov chain $P^N_{\Delta}$. Then the measure at time $t$ is given by
\begin{multline}
(M_N \underbrace{P^N_{\Delta}\cdots P^N_{\Delta}}_{t\textrm{ times}})(\tilde X^N)\\
=(\mu_N \underbrace{P_N\cdots P_N}_{t\textrm{ times}})(\tilde x^N)\Delta^N_{N-1}(\tilde x^N,\tilde x^{N-1})\Delta^{N-1}_{N-2}(\tilde x^{N-1},\tilde x^{N-2})\cdots\Delta^2_1(\tilde x^2,\tilde x^1).
\end{multline}
\end{prop}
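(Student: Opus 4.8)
The plan is to reduce the $t$-fold statement to a single application of $P^N_\Delta$ and then iterate. Writing
\begin{equation*}
M_N\,\underbrace{P^N_\Delta\cdots P^N_\Delta}_{t\textrm{ times}}=(M_N P^N_\Delta)\,\underbrace{P^N_\Delta\cdots P^N_\Delta}_{t-1\textrm{ times}},
\end{equation*}
it suffices to show that one step of the dynamics sends a measure of the form (\ref{eq2.20B}) to another measure of the same form, with $\mu_N$ replaced by $\mu_N P_N$; induction on $t$ then produces $\mu_N\mapsto \mu_N P_N^t$ and hence the claim. So the heart of the matter is the single-step identity.

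For that I would write out
\begin{equation*}
(M_N P^N_\Delta)(\tilde X^N)=\sum_{X^N\in\mathbb{S}_N}M_N(X^N)\,P^N_\Delta(X^N,\tilde X^N),
\end{equation*}
substitute the definitions (\ref{eq2.20B}) and (\ref{eqMConGTnDelta}), and make the observation that is special to the parallel update: each denominator $\Delta^k_{k-1}(x^k,x^{k-1})$ in $P^N_\Delta$ carries exactly the same two time-$t$ arguments as the factor $\Delta^k_{k-1}(x^k,x^{k-1})$ in $M_N(X^N)$, so the two cancel identically. After the cancellation the summand is simply
\begin{equation*}
\mu_N(x^N)\,P_1(x^1,\tilde x^1)\prod_{k=2}^N P_k(x^k,\tilde x^k)\,\Lambda^k_{k-1}(\tilde x^k,x^{k-1}).
\end{equation*}
Now I would sum over the time-$t$ variables $x^1,\ldots,x^N$. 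The structural point is that for each $1\le j\le N-1$ the variable $x^j$ occurs in exactly two factors, $P_j(x^j,\tilde x^j)$ and $\Lambda^{j+1}_j(\tilde x^{j+1},x^j)$, while $x^N$ occurs only in $\mu_N(x^N)$ and $P_N(x^N,\tilde x^N)$. Hence the sum factorizes, and for each $j$,
\begin{equation*}
\sum_{x^j}\Lambda^{j+1}_j(\tilde x^{j+1},x^j)\,P_j(x^j,\tilde x^j)=(\Lambda^{j+1}_j P_j)(\tilde x^{j+1},\tilde x^j)=\Delta^{j+1}_j(\tilde x^{j+1},\tilde x^j),
\end{equation*}
where the last equality is the \emph{right-hand} form of the intertwining relation (\ref{eqIntertwining}), $\Lambda^{j+1}_j P_j=\Delta^{j+1}_j$. (This is exactly where the parallel case differs from the sequential one, which instead invokes $P_k\Lambda^k_{k-1}=\Delta^k_{k-1}$.) The remaining sum is $\sum_{x^N}\mu_N(x^N)P_N(x^N,\tilde x^N)=(\mu_N P_N)(\tilde x^N)$, and collecting the factors reproduces (\ref{eq2.20B}) with $\mu_N$ replaced by $\mu_N P_N$.

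The computation is genuinely elementary, so I expect the main effort to be careful bookkeeping rather than a real obstacle. The points needing attention are: (i) keeping the time labels straight, so that the cancellation of the $\Delta$ denominators is seen to be exact (both arguments sit at time $t$), which is what makes the parallel scheme collapse so cleanly; (ii) verifying that $P^N_\Delta$ is well defined, i.e.\ the denominators $\Delta^k_{k-1}$ do not vanish, which holds by the very definition of the state space $\mathbb{S}_N$; and (iii) confirming that the weak-interlacing indicators built into the $\Lambda^k_{k-1}$ and $\Delta^k_{k-1}$ restrict every intermediate sum to the admissible region, so that no configuration outside $\mathbb{S}_N$ contributes and the telescoping stays inside the state space. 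Once these are checked, the factorized sum together with the intertwining relation completes the proof.
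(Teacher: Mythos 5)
Your proof is correct and follows exactly the route the paper intends (and that Proposition 2.5 of~\cite{BF08} carries out): reduce to a single step, cancel the $\Delta^k_{k-1}$ denominators of $P^N_\Delta$ against the factors of $M_N$, and collapse the factorized sums via the intertwining identity $\Lambda^{k}_{k-1}P_{k-1}=\Delta^{k}_{k-1}$ from (\ref{eqIntertwining}). Your side remarks (i)--(iii), in particular that the cancelled summand vanishes automatically outside $\mathbb{S}_N$ because $\Delta^k_{k-1}(x^k,x^{k-1})=0$ forces $P_k(x^k,\tilde x^k)\Lambda^k_{k-1}(\tilde x^k,x^{k-1})=0$, correctly handle the only technical points.
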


\subsubsection*{Parallel update - extended state space}
In order to obtain the connection with random tiling of the Aztec diamond, we need to consider two small extensions. First of all, the transition probabilities can be taken time-dependent as long as the intertwining property (\ref{eqIntertwining}) holds. We introduce the ''time''-variable $\tau$, which denotes the horizontal position in the scheme as in Figure~\ref{FigParallelExended}, where as a reference point, $\tau=0$, is set such that $P_1(0)$ is the transition of $x^1(0)$ to $x^1(1)$.
\begin{figure}
\begin{center}
  \psfrag{L32a}[l]{$\Lambda^{3}_{2}$}
  \psfrag{L21a}[l]{$\Lambda^{2}_{1}$}
  \psfrag{L32}[l]{$\Lambda^{3}_{2}$}
  \psfrag{L21}[l]{$\Lambda^{2}_{1}$}
  \psfrag{P2a}[c]{$P_2(t-2)$}
  \psfrag{P1a}[c]{$P_1(t-1)$}
  \psfrag{P3}[c]{$P_3(t-2)$}
  \psfrag{P2}[c]{$P_2(t-1)$}
  \psfrag{P1}[c]{$P_1(t)$}
  \psfrag{X3}[c]{$x^3(t)$}
  \psfrag{X2a}[c]{$y^3(t)$}
  \psfrag{X1a}[c]{$y^2(t)$}
  \psfrag{X2}[c]{$\begin{array}{c}x^2(t)\\=y^3(t+1)\end{array}$}
  \psfrag{X1}[c]{$\begin{array}{c}x^1(t)\\=y^2(t+1)\end{array}$}
  \psfrag{Y3}[c]{$x^3(t+1)$}
  \psfrag{Y2}[c]{$x^2(t+1)$}
  \psfrag{Y1}[c]{$x^1(t+1)$}
  \includegraphics[height=4.5cm]{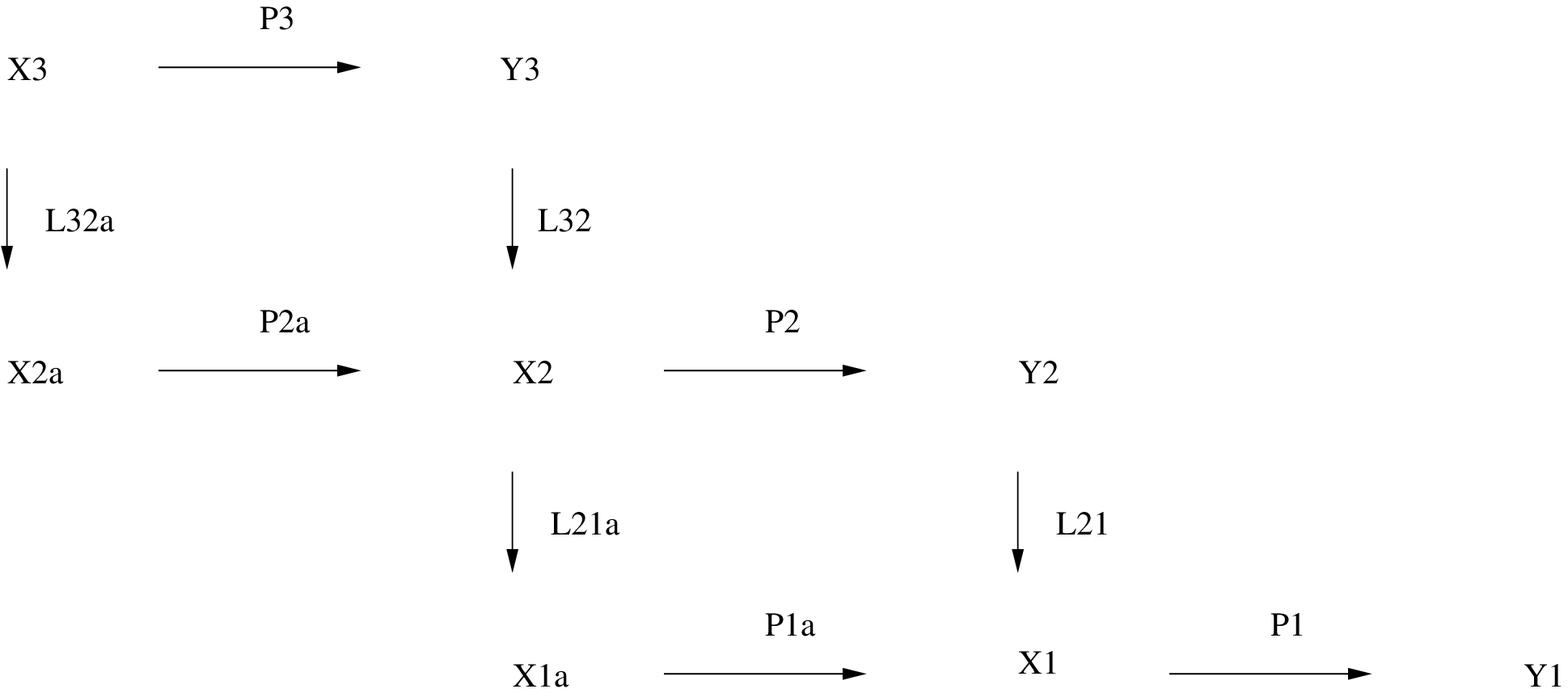}
\caption{Scheme for the parallel update dynamics with extended space time and time-dependent transition probabilities.
The intertwining condition reads, $\Lambda^n_{n-1} P_{n-1}(\tau) = P_n(\tau) \Lambda^n_{n-1}$.}
\label{FigParallelExended}
\end{center}
\end{figure}
Secondly, we extend the state space. Instead of considering the measure (\ref{eq2.20B}) as the one induced by the measure $\mu_N$ on $W_N$ through the action of consecutive actions of $\Delta^n_{n-1}$, we consider it as the induced measure by the consecutive actions of $\Lambda^n_{n-1}$ and $P_{n-1}$. The middle points of the chain $(\Lambda^n_{n-1}\circ P_{n-1})(x^n,x^{n-1})$ is denoted by $y^n$. The choice of using the index $n$ rather than $n-1$ is slightly arbitrary, but the reason lies in the fact that it is natural to use the same index corresponding to the same ``time''-$\tau$ in the generalizations (see next section too). Thus we consider the extended state space
\begin{equation}
\mathbb{S}^{\rm ext}_N=\{X^N\in \mathbb{S}_N, Y^N\in \mathbb{S}_{N-1}\,|\, x^k,y^k\in W_k, y^k \prec x^k, 2\leq k \leq N\}.
\end{equation}
where $X^N=(x^1,\ldots,x^N)$ and $Y^N=(y^1,\ldots,y^{N-1})$. The induced one-time transition of this extended state space can be described as following:
\begin{itemize}
\item the $y$-particles are deterministically updated: $y^{k+1}(t+1)=x^k(t)$,
\item the $x$-particles are updated according to the parallel update (\ref{eqMConGTnDelta}),
\end{itemize}
see Figure~\ref{FigParallelExended}. Also in this setting there is a measure that is conserved.
\begin{prop}\label{PropConservationMeasureTypeParallelExtended}
Let $\mu_N(x^N)$ a probability measure on $W_N$. Consider the evolution of the measure at time $t=0$ given by
\begin{multline}
\mu_N(x^N)\Lambda^N_{N-1}(x^N,y^{N})P_{N-1}(\tau)(y^{N},x^{N-1}) \\
\times \Lambda^{N-1}_{N-2}(x^{N-1},y^{N-1})P_{N-2}(\tau+1)(y^{N-1},x^{N-2})\cdots\\
\times\Lambda^2_1(x^2,y^2)P_1(\tau+N-2)(y^2,x^1)
\end{multline}
under the Markov chain with parallel update in the extended state space. Then, the measure at time $t$ is given by
\begin{multline}
(\mu_N P_N(\tau)\cdots P_N(\tau+t-1))(\tilde x^N)\Lambda^N_{N-1}(\tilde x^N,\tilde y^{N})P_{N-1}(\tau+t)(\tilde y^{N},\tilde x^{N-1}) \\
\times \Lambda^{N-1}_{N-2}(\tilde x^{N-1},\tilde y^{N-1})P_{N-2}(\tau+t+1)(\tilde y^{N-1},\tilde x^{N-2})\cdots\\
\times\Lambda^2_1(\tilde x^2,\tilde y^2)P_1(\tau+t+N-2)(\tilde y^2,\tilde x^1).
\end{multline}
\end{prop}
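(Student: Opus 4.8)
The plan is to prove the statement by induction on $t$ via a single one-step computation, which is the same elementary manipulation underlying Proposition~\ref{PropConservationMeasureTypeParallel} augmented by (a) time-dependent transition matrices and (b) the bookkeeping of the middle variables $Y^N$. The only structural input is the time-dependent intertwining (\ref{eqIntertwining}), i.e.\ $\Delta^k_{k-1}(\tau)=\Lambda^k_{k-1}P_{k-1}(\tau)=P_k(\tau)\Lambda^k_{k-1}$ at each fixed $\tau$; this is what makes the parallel update $P^N_\Delta$ of~(\ref{eqMConGTnDelta}) a genuine stochastic matrix and what drives all the cancellations below. Writing $\nu_t=\mu_N P_N(\tau)\cdots P_N(\tau+t-1)$, the claimed time-$t$ measure is
\[
M^{(t)}(X^N,Y^N)=\nu_t(x^N)\prod_{k=2}^N\Lambda^k_{k-1}(x^k,y^k)\,P_{k-1}(\tau+t+N-k)(y^k,x^{k-1}),
\]
and I record that the extended one-step transition factorizes as the deterministic update $\prod_{k=2}^N\Id_{[\hat y^k=x^{k-1}]}$ times $P^N_\Delta$ (whose level-$k$ factor carries the time index $\tau+t+N-k$ dictated by Figure~\ref{FigParallelExended}), with $P^N_\Delta$ not depending on the old middle variables $Y^N$.

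First I would sum the old middle variables $Y^N$ out of $M^{(t)}$. Since each $y^k$ appears only in the pair $\Lambda^k_{k-1}(x^k,y^k)P_{k-1}(\tau+t+N-k)(y^k,x^{k-1})$ and not in the transition, the summation collapses this pair, by the definition $\Delta^k_{k-1}(\tau)=\Lambda^k_{k-1}P_{k-1}(\tau)$, into $\Delta^k_{k-1}(\tau+t+N-k)(x^k,x^{k-1})$. This produces exactly the time-dependent non-extended measure $\bar M^{(t)}(X^N)=\nu_t(x^N)\prod_{k=2}^N\Delta^k_{k-1}(\tau+t+N-k)(x^k,x^{k-1})$ of the type appearing in Proposition~\ref{PropConservationMeasureTypeParallel}.

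Then I would evolve $\bar M^{(t)}$ by one step of $P^N_\Delta$ and re-introduce the new middle variables through the deterministic update $\hat y^k=x^{k-1}$, which pins $x^1,\dots,x^{N-1}$ to $\hat y^2,\dots,\hat y^N$ and leaves only $x^N$ free. The crucial cancellation is that the denominator $\Delta^k_{k-1}(\tau+t+N-k)$ of the level-$k$ factor of $P^N_\Delta$ is precisely the $\Delta$ created in the first step, so all these ratios reduce to $1$. What remains are the factors $\Lambda^k_{k-1}(\hat x^k,\hat y^k)$ (the correct new $\Lambda$'s), the factors $P_{k-1}(\tau+t+N-(k-1))(\hat y^k,\hat x^{k-1})=P_{k-1}(\tau+(t+1)+N-k)(\hat y^k,\hat x^{k-1})$ --- the index advancing by one automatically because the factor landing at level $k$ of the new measure is the one carried by $P_{k-1}$ in the parallel update --- and the top-level sum $\sum_{x^N}\nu_t(x^N)P_N(\tau+t)(x^N,\hat x^N)=\nu_{t+1}(\hat x^N)$. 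Assembling these reproduces $M^{(t+1)}(\hat X^N,\hat Y^N)$, closing the induction; the base case $t=0$ is the given initial measure.

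The main subtlety will not be any single computation but the consistent tracking of the diagonal time indices $\tau+t+N-k$: I must check that the time argument of the $\Delta$ produced by marginalizing the old $Y^N$ coincides with the time argument sitting in the denominator of the corresponding parallel-update factor (so that they cancel), and simultaneously that the surviving $P_{k-1}$ factors emerge with their arguments advanced by exactly one. Both coincidences are forced by the intertwining holding separately at each fixed $\tau$, so once the indices are carried carefully the argument is purely mechanical.
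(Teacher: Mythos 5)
Your proof is correct and follows essentially the same route as the paper, which simply notes that the argument is the elementary intertwining-based computation of Propositions~\ref{PropConservationMeasureType} and~\ref{PropConservationMeasureTypeParallel} (Proposition 2.5 of~\cite{BF08}) adapted to the extended state space. Your bookkeeping of the diagonal time indices $\tau+t+N-k$, the collapse of $\Lambda^k_{k-1}P_{k-1}$ into $\Delta^k_{k-1}$, and the cancellation against the denominators of $P^N_\Delta$ is exactly the intended verification.
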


The proof of Proposition~\ref{PropConservationMeasureTypeParallelExtended} is almost identical to the one of Propositions~\ref{PropConservationMeasureType} and~\ref{PropConservationMeasureTypeParallel} (see Proposition 2.5 of~\cite{BF08}).

\subsection{Line ensembles associated with parallel update}
The map from a configuration in the extended state space to a set of non-intersecting line ensembles is simple. The LGV graph is the one of Figure~\ref{FigAztec4}. First let us describe where the particles are located and then we will explain how to add the lines joining them. The $x$-particles are at positions
$$\{(2k-1,x^k_j),1\leq j \leq k\leq N\}$$
and the $y$-particles are at positions
$$\{(2k,y^{k+1}_j),1\leq j \leq k\leq N-1\}.$$
At this point we can join the top particles at every time with a line following the LGV graph, then join the top remaining particles and so on. This results in a set of non-intersecting line ensembles as the solid lines in Figure~\ref{FigAztec4} (right) (the deterministically added dashed lines are not yet present, but can be now added).

The above discussion concerns the dynamics only and therefore it is clear that to obtain the Aztec diamond distributed according to the measure (\ref{eqMeasAztec}) one has to start with a special initial condition and also to choose the transition matrices $P_n(\tau)$ in a $\tau$-dependent way such that particles at level $n$ can move for the first time at time moment $n$. The initial condition is the packed initial condition described above, see Lemma~\ref{lemPackedIC}, and the coefficients in the transition matrices $P_n(\tau)$ are chosen to be $p=a^2/(1+a^2)$ for $\tau\geq 0$ and $p=0$ for $\tau<0$, i.e., with $\bf P$ in Definition~\ref{DefDiscreteCharlier} given by
\begin{equation}\label{eq3.31}
\begin{aligned}
{\bf P}(\tau)(x,y)&=\frac{a^2}{1+a^2}\delta_{x+1,y}+\frac{1}{1+a^2}\delta_{x,y},\quad \tau\geq 0,\\
{\bf P}(\tau)(x,y)&=\delta_{x,y},\quad \tau<0.
\end{aligned}
\end{equation}
The dynamics on the line ensembles is then exactly the one described at the end of Section~\ref{SectAztec}. Indeed, first of all the initial conditions are the same. Then, given the values after $n$ steps, the parallel dynamics first update deterministically the values of the $y$-particles and then of the $x$-particles. These latter ores try to jump to their right with probability $p$, which is non-zero only for levels below $n+1$ (compare Figure~\ref{FigParallelExended}). The jumps are allowed only if the arrival configuration satisfies interlacing, i.e., non-intersecting in terms of line ensembles.

\begin{thm}\label{ThmAztecMeasure}
The probability measure on the line ensembles (i.e., on the extended state space) corresponding to the Aztec diamond of size $N$ is given by
\begin{equation}
\begin{aligned}
{\rm const}\times \det[\tilde \Psi_j^N(x_i^N)]_{i,j=1}^N \det[\phi(x_i^N,y_j^{N})]_{i,j=1}^N \det[f(y^{N}_i-x^{N-1}_j)]_{i,j=1}^{N-1}\\
\times \det[\phi(x_i^{N-1},y_j^{N-1})]_{i,j=1}^{N-1} \det[f(y^{N-1}_i-x^{N-2}_j)]_{i,j=1}^{N-2} \\
\cdots\times \det[\phi(x_i^2,y_j^{2})]_{i,j=1}^2 \det[f(y^{2}_i-x^{1}_j)]_{i,j=1}^{1},
\end{aligned}
\end{equation}
where $\tilde \Psi_j^N(x)=\frac{1}{2\pi\I}\oint_{\Gamma_0} \D z (1-z)^{N-i} z^{x+i-1}(1-p+p z^{-1})$ with $p=a^2/(1+a^2)$, and $f(x)=p \delta_{x,1}+(1-p)\delta_{x,0}$, and $\phi$ given in Lemma~\ref{LemInterlacing}.
\end{thm}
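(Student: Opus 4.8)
The plan is to read off the formula as an explicit evaluation of the conserved measure of Proposition~\ref{PropConservationMeasureTypeParallelExtended}, started from the packed initial condition and run for the number of steps that releases the top level $N$ exactly once. Since the discussion preceding the theorem already identifies the line-ensemble dynamics induced by the parallel update (with the transition matrices (\ref{eq3.31}) and packed start) with the shuffling dynamics generating $\Prob_N$, it suffices to write this conserved measure in closed form. First I would specialize Proposition~\ref{PropConservationMeasureTypeParallelExtended} to $\mu_N(x^N)=\delta_{x^N,(-N,\ldots,-1)}$, which by Lemma~\ref{lemPackedIC} equals $\const\,\Delta_N(x^N)\det(\Psi_i(x^N_j))_{i,j=1}^N$ with $\Psi_i$ as in (\ref{eq3.35}). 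The measure at the relevant time is then the product of an evolved top factor $(\mu_N P_N(\cdot)\cdots P_N(\cdot))(x^N)$ with the chain of factors $\Lambda^k_{k-1}(x^k,y^k)\,P_{k-1}(\cdot)(y^k,x^{k-1})$ for $k=N,N-1,\ldots,2$.

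Second, I would evolve the top factor. Writing $P_N(x,y)=\frac{\Delta_N(y)}{\Delta_N(x)}\det({\bf P}(x_i,y_j))$ and applying the Andr\'eief (Cauchy--Binet) identity to the sum over $x\in W_N$, the Vandermonde prefactors collapse and one application of $P_N$ replaces $\det(\Psi_i(x^N_j))$ by $\det((\Psi_i{\bf P})(x^N_j))$ while reproducing the factor $\Delta_N$. A direct contour computation then shows $(\Psi_i{\bf P})(y)=p\,\Psi_i(y-1)+(1-p)\Psi_i(y)=\frac{1}{2\pi\I}\oint_{\Gamma_0}\D z\,(1-z)^{N-i}z^{y+i-1}(1-p+pz^{-1})=\tilde\Psi^N_i(y)$, i.e.\ exactly one convolution with $\bf P$. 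The point I would have to pin down is the $\tau$-bookkeeping: because $p=0$ for $\tau<0$ in (\ref{eq3.31}) and level $n$ is first released at time $n$, all but one of the applications of $P_N$ at the chosen final time are the identity, so precisely the single factor $(1-p+pz^{-1})$ survives. This is where I expect the main obstacle to lie --- arranging the number of steps and the $\tau$-shifts so that the top level has moved exactly once.

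Finally, I would turn each $\Lambda$ and each $P$ into a determinant and collect Vandermondes. By Lemma~\ref{LemInterlacing}, $\Id_{[y^k\prec x^k]}=\det(\phi(x^k_j,y^k_i))$ using the virtual variable, so $\Lambda^k_{k-1}(x^k,y^k)=(k-1)!\,\frac{\Delta_{k-1}(y^k)}{\Delta_k(x^k)}\det(\phi(x^k_j,y^k_i))$; likewise $P_{k-1}(y^k,x^{k-1})=\frac{\Delta_{k-1}(x^{k-1})}{\Delta_{k-1}(y^k)}\det({\bf P}(y^k_i,x^{k-1}_j))$, with ${\bf P}$ matching $f$ up to the orientation convention of the LGV graph. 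In the product $\Lambda^k_{k-1}P_{k-1}$ the factor $\Delta_{k-1}(y^k)$ cancels, leaving $(k-1)!\,\frac{\Delta_{k-1}(x^{k-1})}{\Delta_k(x^k)}$ times the two determinants; these ratios telescope over $k=2,\ldots,N$ to $1/\Delta_N(x^N)$ (since $\Delta_1\equiv 1$), which cancels the $\Delta_N(x^N)$ carried by the evolved top factor, while the factorials are absorbed into the normalization. What remains is exactly the advertised product of the three families of determinants $\det[\tilde\Psi^N_j(x^N_i)]$, $\det[\phi(x^k_i,y^k_j)]$, and $\det[f(y^k_i-x^{k-1}_j)]$, which completes the proof.
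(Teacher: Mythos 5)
Your proposal is correct and takes essentially the same route as the paper: the paper's proof consists precisely of the instruction to plug the packed initial condition (Lemma~\ref{lemPackedIC}), the determinantal form of $\Lambda^n_{n-1}$ from Lemma~\ref{LemInterlacing}, and the explicit $P_n(\tau)$ of (\ref{eq2.11}), (\ref{eq3.31}) into Proposition~\ref{PropConservationMeasureTypeParallelExtended}. Your write-up merely supplies the routine details the paper leaves implicit --- the Andr\'eief/Cauchy--Binet evolution of the top factor into $\det[\tilde\Psi^N_j(x^N_i)]$, the telescoping of the Vandermonde ratios, and the $\tau$-bookkeeping ensuring level $N$ has moved exactly once so that a single factor $(1-p+pz^{-1})$ appears --- and your accounting of these is accurate.
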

This result is simply obtained by plugging in Theorem~\ref{PropConservationMeasureTypeParallelExtended} the explicit expressions of the packed initial condition (see Lemma~\ref{lemPackedIC}), the determinantal form of the $\Lambda^n_{n-1}$ of Lemma~\ref{LemInterlacing} and of $P_N$ (see Definition~\ref{DefDiscreteCharlier} with (\ref{eq2.11}) and (\ref{eq3.31})).

\begin{rem}
It is well known that a measure of the type of Theorem~\ref{SectAztec} is determinantal as it fits in the general framework of \emph{conditional $L$-ensembles} developed in~\cite{RB04} (see for example Theorem 4.1 of~\cite{BF07} for a continuous case analogue). From this one can start computing a lot of observables and, in particular, obtain the law of large numbers / central limit theorem type results mentioned in the introduction.
\end{rem}

\section{Generalizations}\label{SectGeneralization}
In this section we present a generalization of the Aztec diamond and another random tiling model. Both have non-intersecting lines representations with Markovian dynamics induced by a shuffling algorithm, which can be linked with the intertwining Markov chains discussed in Section~\ref{SectAKPZ}.

There are other random tiling models with non-intersecting line representations. One example is the double Aztec diamond studied in~\cite{AJvM14,ACJM13}. They might fit in the general framework, but it is not so straightforward as the underlying LGV graph is not translation-invariant. In that case, the transition probabilities are not translation-invariant and thus Proposition~\ref{propApp3} can not be used to verify the intertwining property.

\subsection{Aztec diamond with non-uniform weights}
In the original paper of Elkies, Kuperbert, Larsen and Propp~\cite{EKLP92}, they considered more general weights on random tilings of the Aztec diamond, namely
\begin{equation}\label{eqAztecVol}
\widetilde \Prob_N(T)\propto a^{v(T)} q^{r(T)}
\end{equation}
where $v(T)$ is the number of vertical tiles and $r(T)$ is what they called \emph{rank}. The rank is the minimal number of elementary moves (switch of a pair of vertical dominoes into a pair of horizontal dominoes, or vice versa) that is needed to generate the tiling $T$ from the tiling $T_0$ consisting of exclusively horizontal dominoes. The value of $r(T)$ can be read off the line ensembles. For the Aztec diamond of size $N$, only the top $N$ lines, call them $\ell_1(T),\ldots,\ell_N(T)$, are not deterministically set to be straight lines (see for instance Figure~\ref{FigAztec4}). The line ensemble of the tiling $T_0$ consists of all straight lines. Then, $r(T)=\sum_{n=1}^N {\rm Area}(\ell_n(T)-\ell_n(T_0))$. This model was further studied in~\cite{CY14,BBCCR15}.

From the line ensembles of the Aztec diamond of size $N$, it is easy to verify that the weight $\widetilde \Prob_N$ can be obtained from the LGV graph made by $N$ copies of the basic graph but with non-homogeneous weights. More precisely, the resulting LGV graph is represented in Figure~\ref{FigLGVAztecVolume}.
\begin{figure}
\begin{center}
\psfrag{a1}[c]{$\alpha_1$}
\psfrag{a2}[c]{$\alpha_2$}
\psfrag{a3}[c]{$\alpha_3$}
\psfrag{an}[c]{$\alpha_N$}
\psfrag{b1}[c]{$\beta_1$}
\psfrag{b2}[c]{$\beta_2$}
\psfrag{b3}[c]{$\beta_3$}
\psfrag{bn}[c]{$\beta_N$}
\psfrag{y1}[c]{$\Omega_0$}
\psfrag{x1}[c]{$x^1$}
\psfrag{y2}[c]{$y^2$}
\psfrag{x2}[c]{$x^2$}
\psfrag{y3}[c]{$y^3$}
\psfrag{x3}[c]{$x^3$}
\psfrag{y4}[c]{$y^4$}
\psfrag{yN}[c]{$y^N$}
\psfrag{xN}[c]{$x^N$}
\psfrag{yNp1}[c]{$\Omega_0$}
\includegraphics[height=5cm]{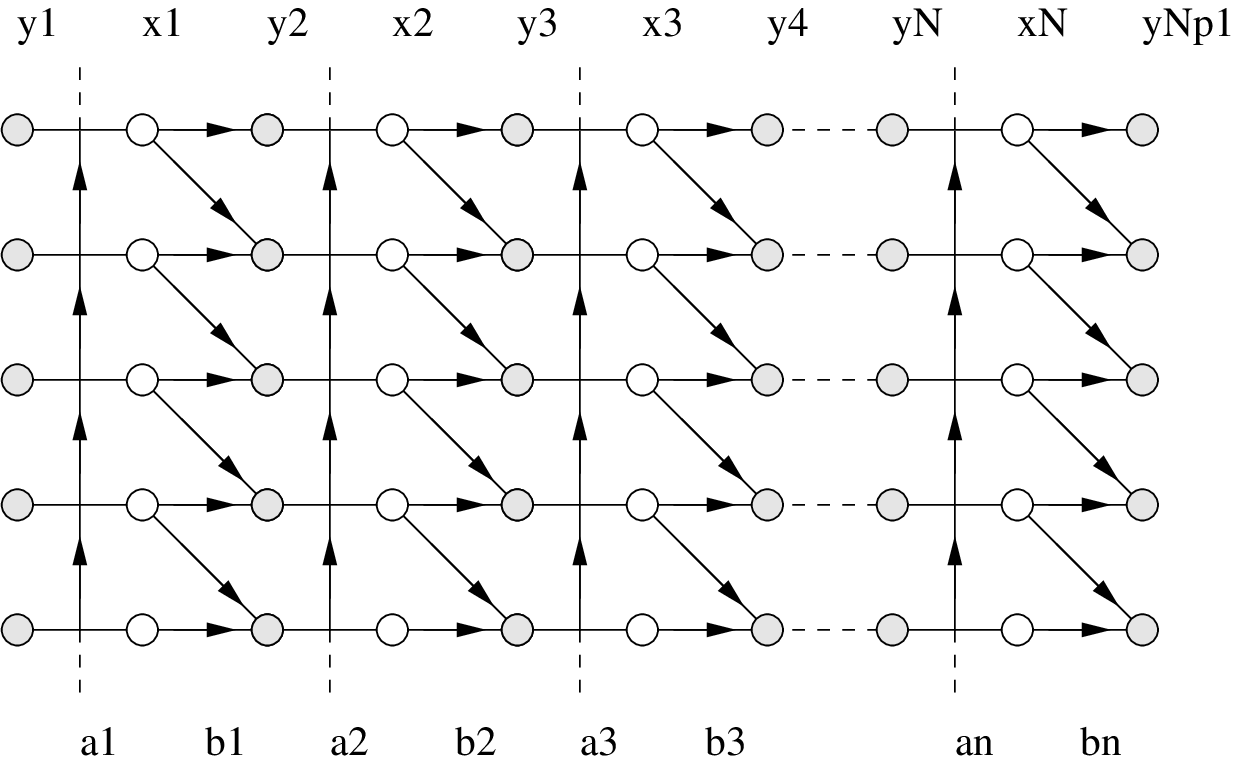}
\caption{LGV graph for the Aztec diamond distributed according to (\ref{eqAztecVol}). The vertical edges have weights $\alpha_1=a^2 q^{2N}, \alpha_2=a^2 q^{2N-2},\ldots,\alpha_N=a^2 q^2$, the diagonal edges have weights $\beta_1=1/q^{2N-1},\beta_2=1/q^{2N-3},\ldots,\beta_N=1/q$, and all other edges have weight $1$.}
\label{FigLGVAztecVolume}
\end{center}
\end{figure}

This line ensemble is the one generated by the Markov dynamics with parallel update, with packed initial condition, and the transition matrices given as follows. Consider parameters $\alpha_1,\alpha_2,\ldots$ all different (the case where two or more are equal is obtained as a limit): for $x^n,y^{n+1}\in W_n$,
\begin{equation}\label{eqPnGen}
P_n(\tau)(x^n,y^{n+1})=\frac{\det[\alpha_i^{y_j^{n+1}}]_{i,j=1}^n}{\det[\alpha_i^{x_j^n}]_{i,j=1}^n}
\frac{\det[f_n(x_i^n-y_j^{n+1},\tau)]_{i,j=1}^n}{\prod_{j=1}^n(1+\beta_n(\tau)/\alpha_j)},
\end{equation}
with
\begin{equation}
f_n(m,\tau)= \delta_{m,0}+\beta_n(\tau)\delta_{m,-1},
\end{equation}
where
\begin{equation}
\beta_n(\tau)=\left\{
\begin{array}{cc}\beta_n&\textrm{ for }\tau\geq 0,\\
0,&\textrm{ for }\tau<0.
\end{array}\right.
\end{equation}
One can easily verify that, for $\tau\geq 0$,
\begin{equation}
\det[f_n(x_i^n-y_j^{n+1},\tau)]_{i,j=1}^n =
\left\{\begin{array}{ll}
\prod_{j=1}^n \beta_n^{y_j^{n+1}-x_j^n},&\textrm{ if }y_j^{n+1}-x_j^n\in\{0,1\}, 1\leq j \leq n,\\
0&\textrm{ otherwise},
\end{array}\right.
\end{equation}
while for $\tau<0$ is it just $\det[f_n(x_i^n-y_j^{n+1},\tau)]_{i,j=1}^n=\prod_{j=1}^n \delta_{x_j^n,y_j^{n+1}}$.

Further, the intertwining is satisfied for the following choice of $\Lambda^n_{n-1}$ (see Proposition~\ref{propApp3}): for $x^n\in W_n$ and $y^{n}\in W_{n-1}$, define
\begin{equation}\label{eqLnGen}
\Lambda^n_{n-1}(x^n,y^n) =\frac{\det[\alpha_i^{y_j^n}]_{i,j=1}^{n-1}}{\det[\alpha_i^{x_j^n}]_{i,j=1}^n}\frac{\det[\tilde f_n(x_i^n-y_j^n)]_{i,j=1}^n}{\prod_{j=1}^{n-1}(1-\alpha_n/\alpha_j)^{-1}},
\end{equation}
where $y_n^n={\rm virt}$ is a virtual variable in order to write the formula in a more compact way, $\tilde f_n(x-{\rm virt})=\alpha_n^x$, and
\begin{equation}
\tilde f_n(m)=\sum_{k\geq 0}\alpha_n^m \delta_{m,k}.
\end{equation}

\subsection{Random tiling of a Tower}
In this section we present a generalization that fits in our general Markov chain dynamics, then we will explain the induced dynamics on a set of non-intersecting line ensembles, and finally we will explain how this generalizes to a shuffling-type algorithm for an associated random tiling model.

\subsubsection*{Markov chains}
Now we consider a Markov chain on the scheme of Figure~\ref{FigGeneral}. It differs from the one of the Aztec diamond represented in Figure~\ref{FigParallelExended} by the fact that the transition between $x^n$ and $x^{n-1}$ is a sequence of two $\Lambda$ and one $P$ transition. The middle points of the chain between $x^n$ and $x^{n-1}$ are denoted by $y^n$ and $z^n$ respectively.
\begin{figure}
\begin{center}
  \psfrag{X20}[c]{$x^2(0)$}
  \psfrag{X20b}[c]{$x^2(1)$}
  \psfrag{Y20}[c]{$y^2(0)$}
  \psfrag{Y20b}[c]{$y^2(1)$}
  \psfrag{Z20}[c]{$z^2(0)$}
  \psfrag{Z21}[c]{$x^1(0)=z^2(1)$}
  \psfrag{X21b}[c]{$x^1(1)$}
  \psfrag{Y10}[c]{$y^1(0)$}
  \psfrag{Y10b}[c]{$y^1(1)$}
  \psfrag{L43}[l]{$\Lambda^4_3$}
  \psfrag{L32}[l]{$\Lambda^3_2$}
  \psfrag{L43b}[l]{$\Lambda^4_3$}
  \psfrag{L32b}[l]{$\Lambda^3_2$}
  \psfrag{L21}[l]{$\Lambda^2_1$}
  \psfrag{P4}[c]{$P_4(-1)$}
  \psfrag{P4b}[c]{$P_4(0)$}
  \psfrag{P3}[c]{$P_3(-1)$}
  \psfrag{P3b}[c]{$P_3(0)$}
  \psfrag{P2}[c]{$P_2(-1)$}
  \psfrag{P2b}[c]{$P_2(0)$}
  \psfrag{P1}[c]{$P_1(0)$}
  \includegraphics[height=5.5cm]{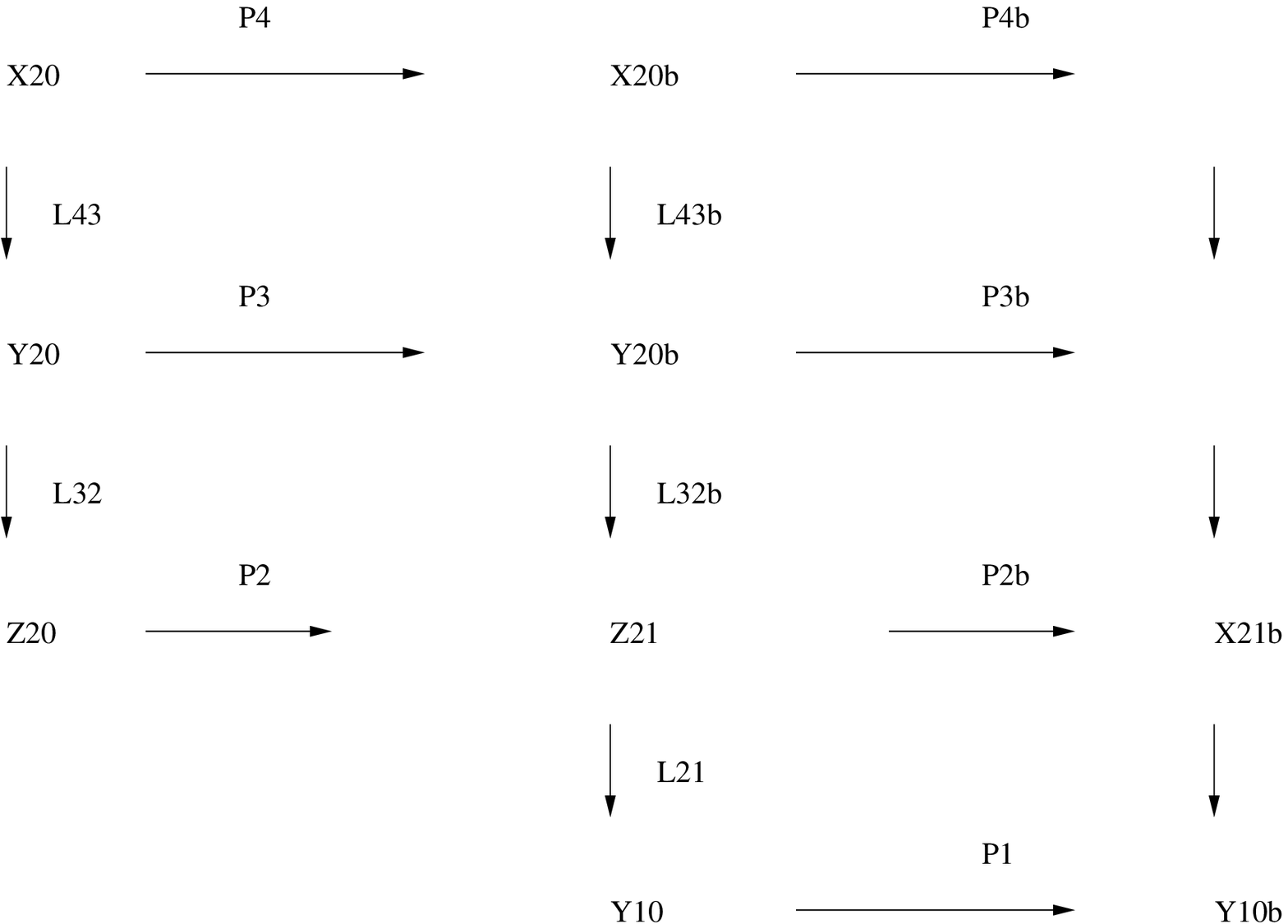}
\caption{Scheme of the Markov chain for the random tiling of a Tower.}
\label{FigGeneral}
\end{center}
\end{figure}
Specifically, let us consider the case where the transition are:
\begin{itemize}
\item from $x^n(t)$ to $y^n(t)$: $\Lambda^{2n}_{2n-1}$,
\item from $y^n(t)$ to $z^n(t)$: $\Lambda^{2n-1}_{2n-2}$,
\item from $z^n(t)$ to $x^{n-1}(t)$: $P_{2n-2}(t-n+1)$,
\end{itemize}
where $P_n(\tau)$ and $\Lambda^n_{n-1}$ is given as in (\ref{eqPnGen}) and (\ref{eqLnGen}) in the limit $\alpha_1=\alpha_3=\ldots=\alpha$, $\alpha_2=\alpha_4=\ldots=\tilde \alpha$, and $\beta_1=\beta_2=\ldots=\beta$.

We consider packed initial conditions for the $x^1,x^2,\ldots$, which by interlacing implies also $z^n(0)=y^n(0)=x^n(0)$, and define the evolution of the Markov chain is as follows:
\begin{itemize}
\item update in parallel $z^{n+1}(t+1)=x^{n}(t)$,
\item update $y^{n+1}(t+1)$ as the middle point of the two-step chain \mbox{$(P_{2n-1}(t-n+1)\circ \Lambda^{2n-1}_{2n-2})(y^{n+1}(t),x^n(t))$},
\item update $x^{n+1}(t+1)$ as the middle point of the two-step chain \mbox{$(P_{2n}(t-n+1)\circ \Lambda^{2n}_{2n-1})(x^{n+1}(t),y^n(t+1))$}.
\end{itemize}

\begin{thm}\label{ThmTowerMeasure}
The probability measure on the line ensembles (i.e., on the extended state space) corresponding to the Tower of size $N$ is given by
\begin{multline}
{\rm const}\times \det[\tilde\Psi^{2N}_i(x^N_j)]_{i,j=1}^{2N} \det[\phi(x^N_i,y^N_j)]_{i,j=1}^{2N} \det[\phi(y^N_i,z^N_j)]_{i,j=1}^{2N-1}\\
\times \det[\tilde f(x^{N-1}_i-z^N_j)]_{i,j=1}^{2N-2}  \det[\phi(x^{N-1}_i,y^{N-1}_j)]_{i,j=1}^{2N-2} \det[\phi(y^{N-1}_i,z^{N-1}_j)]_{i,j=1}^{2N-3}\\
\times\cdots\times \det[\tilde f(x^{1}_i-z^2_j)]_{i,j=1}^{2} \det[\phi(x^{1}_i,y^{1}_j)]_{i,j=1}^{2} \prod_{n=1}^N F(x^n,y^n,z^n),
\end{multline}
where $\tilde\Psi^{2N}_i(x)=\frac1{2\pi\I}\oint_{\Gamma_0}\D z (1-z)^{2N-i}z^{x+i-1}(1+\beta/z)$, $\tilde f(x)=\delta_{x,0}+\beta \delta_{x,1}$, and $\phi$ is given in Lemma~\ref{LemInterlacing} with $y^n_{2n}={\rm virt}$ as well as $z^n_{2n-1}={\rm virt}$. Further,
\begin{equation}
F(x^n,y^n,z^n)=\prod_{i=1}^{2n} \alpha^{x^n_i}\prod_{j=1}^{2n-1} (\tilde\alpha/\alpha)^{y^n_j}\prod_{k=1}^{2n-2} (1/\tilde\alpha)^{z_i^n}.
\end{equation}
\end{thm}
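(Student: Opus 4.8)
The plan is to follow the template already used for Theorem~\ref{ThmAztecMeasure}: feed the packed initial condition and the explicit determinantal forms~\eqref{eqPnGen}--\eqref{eqLnGen} of the transition kernels into a conservation-of-measure statement for the chain of Figure~\ref{FigGeneral}, and then read the resulting joint measure on the line ensemble off as a product of determinants. The only structurally new features are that each level transition is now a three-step composition (two $\Lambda$'s and one $P$, with intermediate coordinates $y^n$ and $z^n$) and that the parameters are specialized confluently to $\alpha_{2k-1}=\alpha$, $\alpha_{2k}=\tilde\alpha$, $\beta_j=\beta$.

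First I would record the analogue of Proposition~\ref{PropConservationMeasureTypeParallelExtended} for the Tower chain. Since the kernels~\eqref{eqPnGen}--\eqref{eqLnGen} satisfy the intertwining~\eqref{eqIntertwining} (Proposition~\ref{propApp3}), iterating the elementary commutation argument of Propositions~\ref{PropConservationMeasureType}--\ref{PropConservationMeasureTypeParallelExtended}, now through the composition $\Lambda^{2n}_{2n-1}\,\Lambda^{2n-1}_{2n-2}\,P_{2n-2}$, shows that a measure of the stated product form evolves by pushing its top marginal $\mu_{2N}$ through $P_{2N}$ while leaving the chain of $\Lambda$ and $P$ factors below it unchanged. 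I would then take $\mu_{2N}$ to be the packed initial condition. By the level-$2N$ version of Lemma~\ref{lemPackedIC}, the packed datum carried through the single active top transition $P_{2N}(\tau)$, $\tau\ge0$, yields $\det[\tilde\Psi^{2N}_i(x^N_j)]$, the factor $(1+\beta/z)$ in $\tilde\Psi^{2N}_i$ being precisely the Fourier symbol of that one application of $P_{2N}$.

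Next I would substitute~\eqref{eqPnGen} and~\eqref{eqLnGen} for every remaining transition, keeping the $\alpha_i$ generic. Each $\Lambda$ factors, through the identity $\tilde f_n(m)=\alpha_n^m\Id_{[m\ge0]}$ and Lemma~\ref{LemInterlacing}, into a bare interlacing determinant $\det[\phi(\cdot,\cdot)]$ times a diagonal weight $\prod_i\alpha_n^{(\cdot)}\prod_j\alpha_n^{-(\cdot)}$ times the Vandermonde ratio $\det[\alpha_i^{\cdot}]/\det[\alpha_i^{\cdot}]$; each $P$ produces $\det[\tilde f(\cdot)]$ directly (up to transpose), its own Vandermonde ratio, and the scalar $\prod_j(1+\beta_n/\alpha_j)^{-1}$. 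Within each level-$n$ block the $\alpha$-Vandermondes built on $y^n$ and $z^n$ cancel between consecutive factors, and the $x$-Vandermondes---together with the one generated when $\mu_{2N}$ is pushed through $P_{2N}$---telescope and cancel against the denominators of the $\Lambda$'s, leaving exactly the determinant $\det[\tilde\Psi^{2N}]$ at the top, the bare determinants $\det[\phi]$ and $\det[\tilde f]$ throughout, the extracted diagonal weights, and a configuration-independent scalar built from the normalizations.

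The main obstacle is the confluent limit $\alpha_{2k-1}\to\alpha$, $\alpha_{2k}\to\tilde\alpha$, $\beta_j\to\beta$. Individually the surviving pieces are singular: the normalization $\prod_j(1-\alpha_{2n}/\alpha_j)$ attached to $\Lambda^{2n}_{2n-1}$ acquires a vanishing factor, while the telescoped Vandermonde ratios degenerate as rows coincide. The point to establish is that the configuration-dependent combinatorial content ($\det[\tilde\Psi^{2N}]$, $\det[\phi]$, $\det[\tilde f]$) is already regular, so one needs only to control the scalar-and-weight prefactor. Evaluating the coinciding $\alpha$-Vandermonde ratios through their confluent (Wronskian-type) form, I expect the residual $\alpha,\tilde\alpha$ powers to reorganize, level by level, into $\prod_{n=1}^N F(x^n,y^n,z^n)$ with exactly the exponents $\alpha^{x^n_i}$, $(\tilde\alpha/\alpha)^{y^n_j}$, $(1/\tilde\alpha)^{z^n_k}$, and all remaining parameter-dependent but configuration-independent factors to collapse into the single constant. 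Checking that this limit is finite and produces precisely these exponents---rather than merely some measure in the same determinantal class---is the one computation requiring genuine care; the rest is the bookkeeping already carried out for Theorem~\ref{ThmAztecMeasure}.
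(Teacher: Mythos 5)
Your proposal matches the paper's (implicit) argument: the paper offers no proof of Theorem~\ref{ThmTowerMeasure} beyond the route it spells out for Theorem~\ref{ThmAztecMeasure}, namely plugging the packed initial condition, the determinantal kernels (\ref{eqPnGen})--(\ref{eqLnGen}) and the analogue of Proposition~\ref{PropConservationMeasureTypeParallelExtended} into the product measure and reading off the factors. Your extra care with the confluent limit is sound but slightly more than is needed: since the ratios $\det[\alpha_i^{\,\cdot}]/\det[\alpha_i^{\,\cdot}]$ telescope completely and the surviving top denominator cancels against the same factor inside the evolved $\mu_{2N}$, the degenerate Vandermondes never have to be evaluated, and all singular normalizations can be absorbed into the overall constant of the (automatically normalized) probability measure.
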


\subsubsection*{Non-intersecting line ensembles}
\begin{minipage}{0.75\textwidth}
\hspace{1em} In a similar way as for the Aztec diamond, we can produce a set of non-intersecting line ensembles as follows.
The basic Lindstr\"om-Gessel-Viennot (LGV) directed graph of this generalization is the following, where the first column of vertical edges have weights $\alpha$, the second column have weights $\tilde \alpha$, and the diagonal edges have weights $\beta$. Consider the graph obtained by $N$ copies of this basic LGV graph as in Figure~\ref{FigLGVGeneralN}.
\end{minipage}
\hfill
\begin{minipage}{0.22\textwidth}
\begin{center}
\includegraphics[height=4cm]{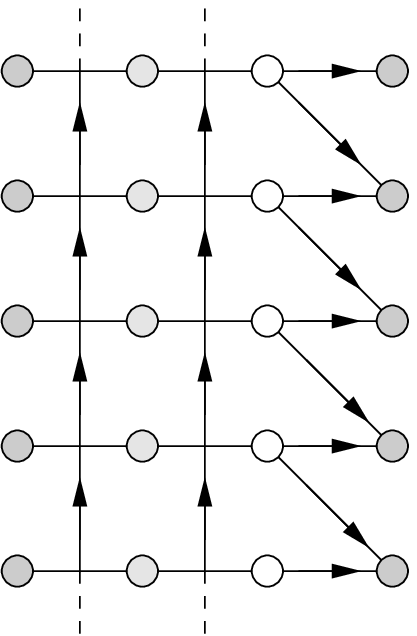}
\end{center}\vfill
\end{minipage}

\begin{figure}
\begin{center}
\psfrag{a1}[c]{$\alpha$}
\psfrag{a2}[c]{$\tilde \alpha$}
\psfrag{b}[c]{$\beta$}
\psfrag{z1}[c]{$\Omega_0$}
\psfrag{y1}[c]{$y^1$}
\psfrag{x1}[c]{$x^1$}
\psfrag{z2}[c]{$z^2$}
\psfrag{y2}[c]{$y^2$}
\psfrag{x2}[c]{$x^2$}
\psfrag{z3}[c]{$z^3$}
\psfrag{zN}[c]{$z^N$}
\psfrag{yN}[c]{$y^N$}
\psfrag{xN}[c]{$x^N$}
\psfrag{zNp1}[c]{$\Omega_0$}
\includegraphics[height=5cm]{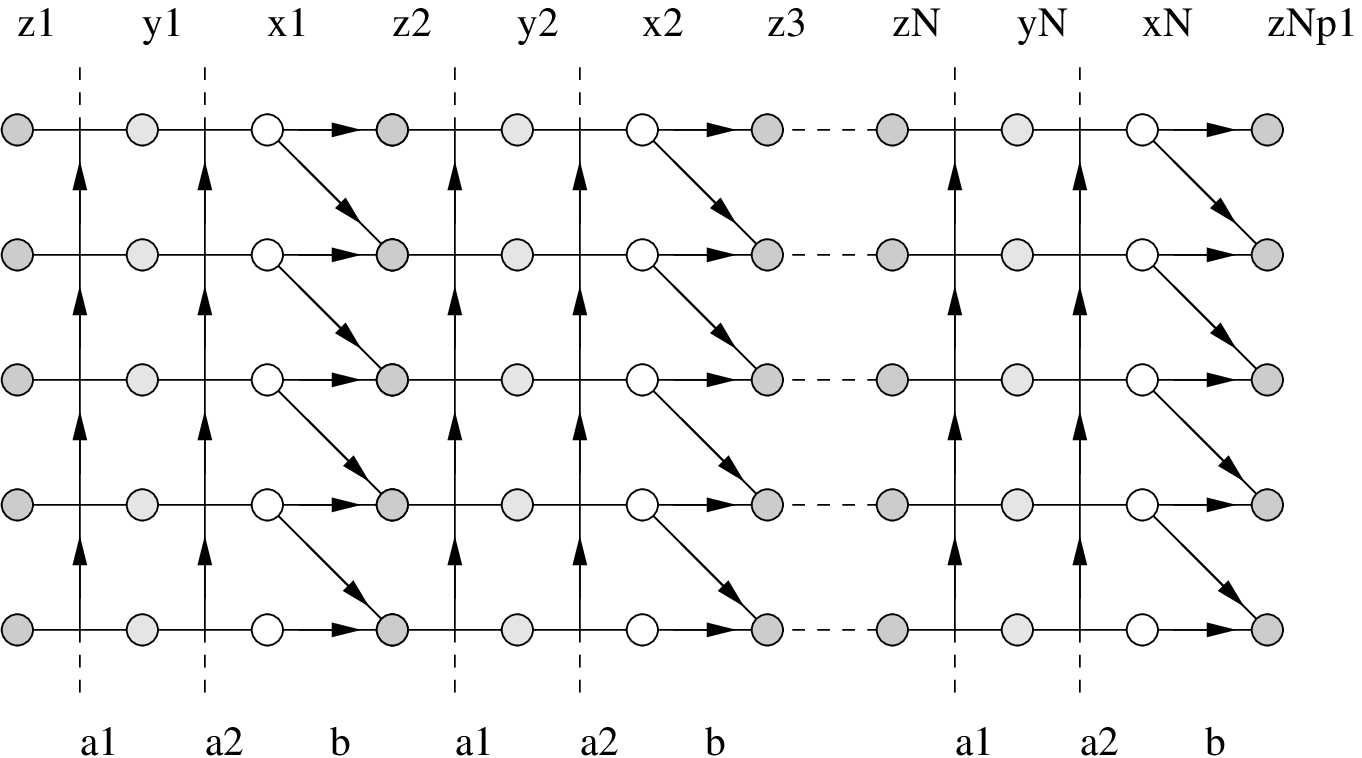}
\caption{LGV graph for the $\alpha-\alpha-\beta$ generalization. The vertical edges have weights $\alpha$, $\tilde \alpha$ and the diagonal edges have weight $\beta$. All other edges have weight $1$. The lines start and end from positions $\Omega_0=\{-1,-2,\ldots\}$.}
\label{FigLGVGeneralN}
\end{center}
\end{figure}

For a configuration $\{y^1,x^1,z^2,y^2,x^2,\ldots,z^N,y^N,x^N\}$, place the \mbox{$x$-particles} at positions
$$\{(3k-1,x^k_j),1 \leq k\leq N, 1\leq j \leq 2k\},$$
the $y$-particles at positions
$$\{(3k-2,y^k_j),1\leq k\leq N, 1\leq j\leq 2k-1\},$$
and the $z$-particles at positions
$$\{(3k-3,z^k_j),2\leq k\leq N, 1\leq j \leq 2k-2\}.$$
The ensemble of non-intersecting lines is the one where lines start (at line-time $0$) and end (at line-time $3N$) from $\{-1,-2,\ldots\}$. The weight of such a configuration generated by the dynamics described above is the following: vertical edges just before $y$-particles have weight $\alpha$, the ones just after the $y$-particles have weight $\tilde\alpha$, and the diagonal edges have weight $\beta$. All other weights are set to be $1$, see Figure~\ref{FigLinesGeneral} for an illustration.

\begin{figure}
\begin{center}
\includegraphics[height=6cm]{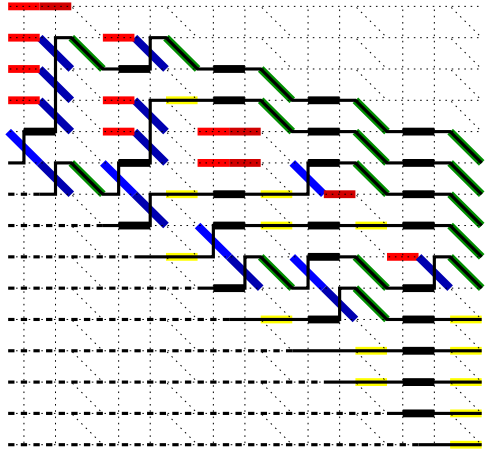}
\caption{Line ensembles decorated with domino colors as in Figure~\ref{FigRocketSmall} below.}
\label{FigLinesGeneral}
\end{center}
\end{figure}

\subsubsection*{Random tiling model}
The set of non-intersecting line ensembles described above and generated by the Markov chain, can be associated to a random tiling model. For this generalization it is easier to think of a random tiling as a perfect matching of a bipartite graph, instead of a covering of a region. For the Aztec diamond, the $N=1$ bipartite graph is illustrated in Figure~\ref{FigBasicRocket} (right). In the present case, the $N=1$ basic building bloc is the one in Figure~\ref{FigBasicRocket} (left) consisting of a hexagon and two squares. As it looks like a small tower, we call this random tiling of a Tower. \begin{figure}
\begin{center}
\includegraphics[height=4cm]{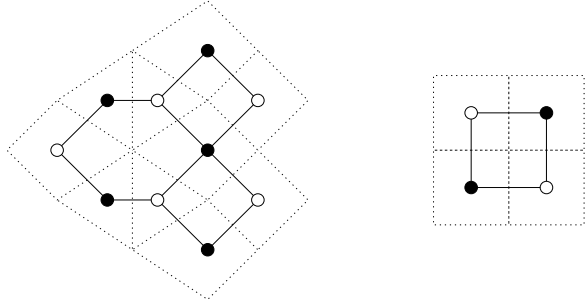}
\caption{The basic building bloc of the Tower (left) vs. the one of the Aztec diamond (right). One can think of random tiling of the plane with ``dominoes'' defined as two neighboring basic elements (squares, triangles, or kite for the Tower; only squares for Aztec) or, equivalently, think of \emph{perfect matching} of the dual graph inside the Tower / Aztec.}
\label{FigBasicRocket}
\end{center}
\end{figure}

The random tiling model of size $N$ consists in a perfect matching of the following graph: in the left-most column we have $N$ basic blocs where the hexagons share a horizontal edge, this is followed by $N+1$ basic blocs obtained by putting the left vertex of the hexagon at the right edge of the squares. This is repeated $N$ times, so that in the end we have $(3N-1)N/2$ hexagons. In Figure~\ref{FigRocketSmall} we show an example for $N=5$.
\begin{figure}
\begin{center}
\includegraphics[height=5cm]{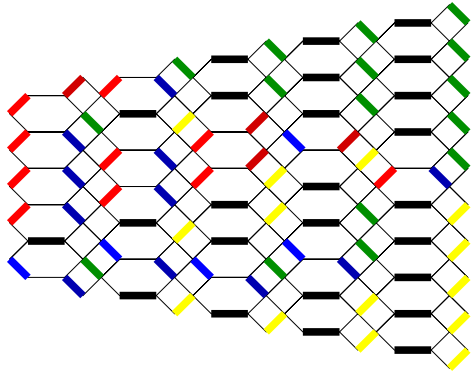}
\caption{An example of random tiling of a Tower of size $N=5$. There are two types of red/blue dominoes, depending on which $\alpha$ the corresponding column belongs to.}
\label{FigRocketSmall}
\end{center}
\end{figure}
The weight of a random tiling configuration is the following: each green domino has weight $\beta$, each blue domino has either weight $\alpha$ or $\tilde\alpha$ depending its horizontal coordinate.

\subsubsection*{Shuffling-type algorithm}
It is not too surprising that one can define a shuffling algorithm to generate the random tiling. For simplicity, consider the uniform tiling model. Start with a domino tiling of a Tower of size $N$.
\begin{itemize}
\item[Step 1:] Blue dominoes stay put. Red dominoes move up by one unit. Yellow dominoes move to the right and down by one unit. Green dominoes move right and up by one unit. Also, delete the dominoes whose trajectories intersect while doing these moves.
\item[Step 2:] What remains is a partial tiling of the Tower of size $N+1$. The remainder can be uniquely decomposed into disjoint pieces, each of one is one of the following $5$ configurations: (a) the basic building bloc, (b) hexagon plus a square (above or below), (c) hexagon, and (d) square. Then randomly tile all these pieces independently.
\end{itemize}
The result is a tiling of a Tower of size $N+1$ that had the desired distribution. In Figure~\ref{FigRocketLarge} we show a random tiling with uniform distribution of a Tower of size $N=100$, corresponding to $\alpha=\tilde\alpha=\beta=1$.
\begin{figure}
\begin{center}
\includegraphics[height=14cm,angle=-90]{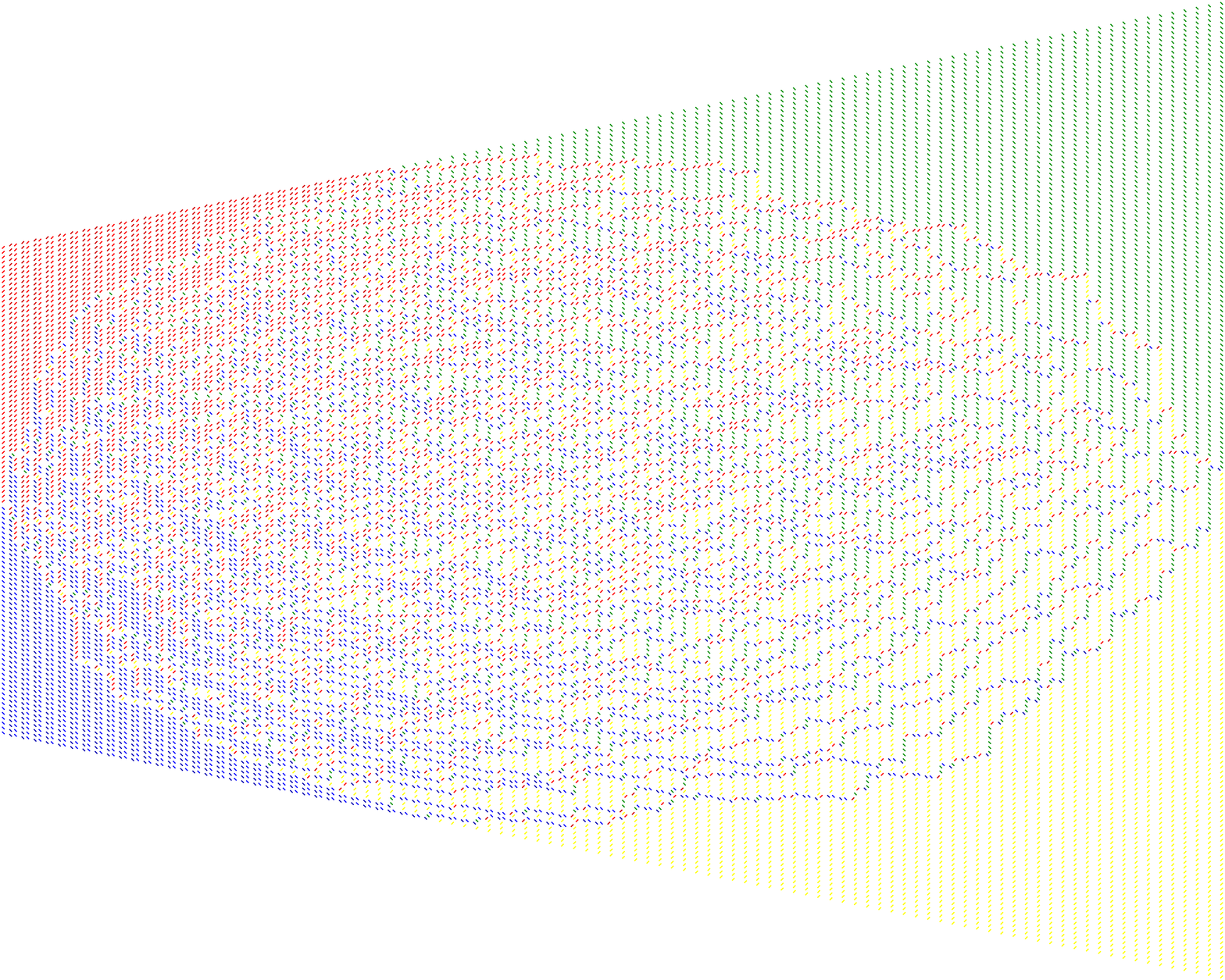}
\caption{An example of random tiling of a Tower of size $N=100$ with parameters $\alpha=\tilde \alpha=\beta=1$. For better visibility the background graph and the black dominoes are not shown. Also, the picture is rotated by 90 degrees with respect to the one in Figure~\ref{FigRocketSmall}.}
\label{FigRocketLarge}
\end{center}
\end{figure}

By choosing the parameters $\alpha$, $\tilde\alpha$, and $\beta$ not all equal to one, one can obtain different figures. For instance, one can get a heart-like limit shape as in Figure~\ref{FigRocketLargeBis}.
\begin{figure}
\begin{center}
\includegraphics[height=14cm,angle=-90]{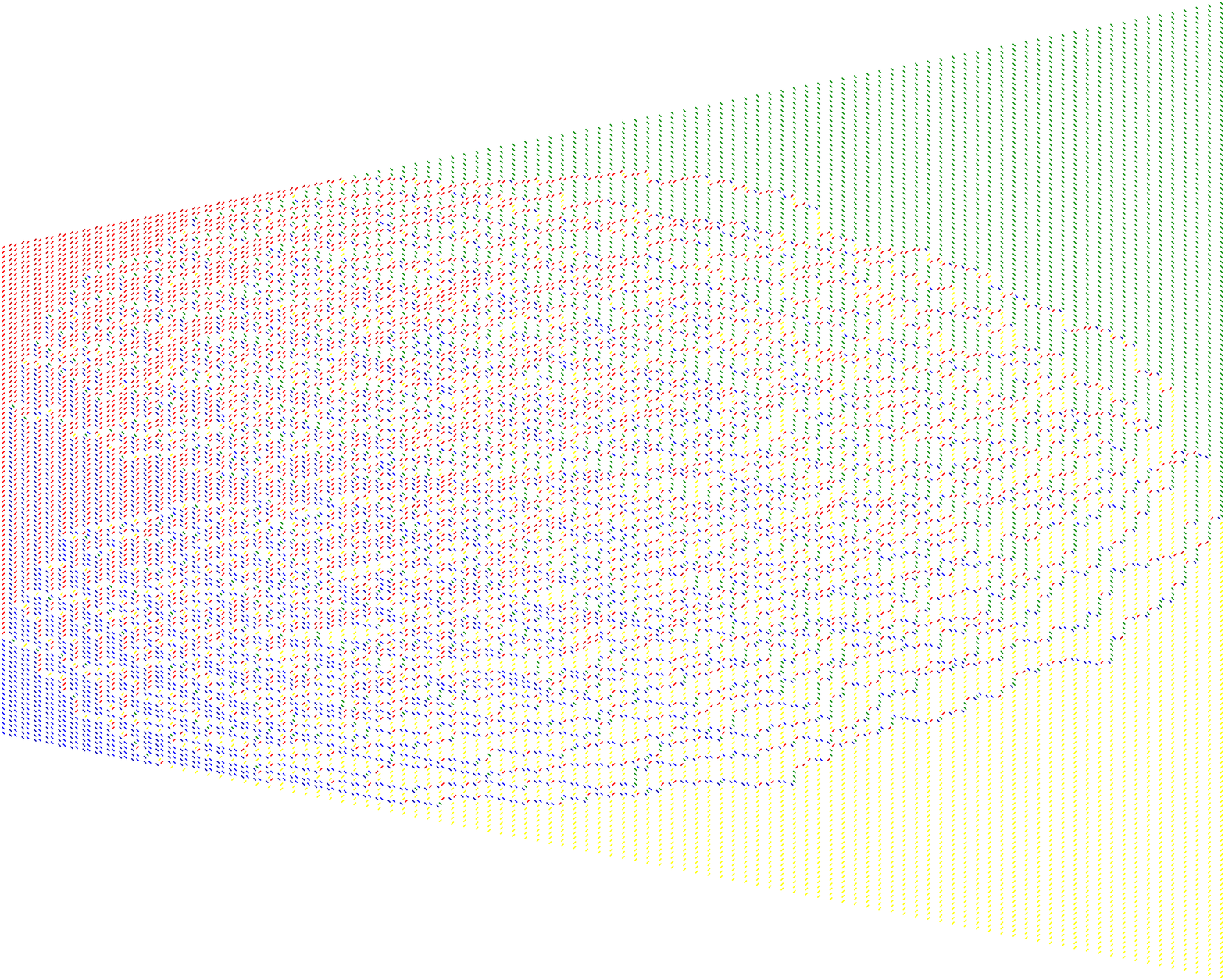}
\caption{An example of random tiling of a Tower of size $N=100$ with parameters $\alpha=1/4$, $\tilde \alpha=3/4$, $\beta=1$. For better visibility the background graph and the black dominoes are not shown.}
\label{FigRocketLargeBis}
\end{center}
\end{figure}
One can generalize the above setting to have $m>2$ instead of $2$ $\alpha$-blocs. In that case, by choosing different $\alpha$-values for the different weights, one can get for instance a limit shape with $m-1$ fjords instead of one (see~\cite{Pet12,DM14} for similar behaviors in lozenge tiling). In the global scaling, one expects to see Gaussian Free Field fluctuations~\cite{Ken01}, at the edges of the regular pieces one expects to see an Airy$_2$ process~\cite{PS02,Jo03}, and tip of the fjord one expects to see the Pearcey process~\cite{TW06}.

\newpage
\appendix

\section{Toeplitz-like transition probabilities}\label{AppToeplitz}
Here we give some results on transition probabilities which are translation invariant. They are taken from Section~2.3 of~\cite{BF08}. Our cases are obtained by limits when all $a_i$'s goes to the same value, say $1$.
\begin{prop}[Proposition 2.8 of~\cite{BF08}]\label{propApp1}
Let $a_1,\ldots,a_n$ be non-zero complex numbers and let $F(z)$ be an analytic function in an annulus $A$ centered at the origin that contains all $a_j^{-1}$'s. Assume that $F(a_j^{-1})\neq 0$ for all $j$. Then, for $x^n\in W_n$,
\begin{equation}
\frac{\sum_{y^n\in W_n}\det\big(a_i^{y_j^n}\big)_{i,j=1}^n \det(f(x_i^n-y_j^n))_{i,j=1}^n}{F(a_1^{-1})\cdots F(a_n^{-1})} = \det\big(a_i^{x_j^n}\big)_{i,j=1}^n,
\end{equation}
where
\begin{equation}
f(m)=\frac{1}{2\pi\I}\oint_{\Gamma_0}\D z \frac{F(z)}{z^{m+1}}.
\end{equation}
\end{prop}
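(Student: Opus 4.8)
The plan is to collapse the sum over the Weyl chamber $W_n$ into a single determinant by means of the discrete Cauchy--Binet (Andr\'eief) identity, and then to recognise the entries of the resulting matrix as values of $F$ at the points $a_i^{-1}$.

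First I would record the key observation that $f(m)$ is nothing but the $m$-th Laurent coefficient of $F$ on the annulus $A$, so that $F(z)=\sum_{m\in\Z} f(m)\,z^m$ for $z\in A$. Next I would apply Cauchy--Binet to the two families $\phi_i(y)=a_i^{y}$ and $\psi_i(y)=f(x_i^n-y)$, which are evaluated at the common set of columns $y_1^n<\cdots<y_n^n$ as we sum over $y^n\in W_n$:
\begin{equation}
\sum_{y^n\in W_n}\det\big(a_i^{y_j^n}\big)_{i,j=1}^n \det\big(f(x_i^n-y_j^n)\big)_{i,j=1}^n
=\det\Big(\sum_{y\in\Z} a_i^{y}\, f(x_k^n-y)\Big)_{i,k=1}^n .
\end{equation}
The inner sum is then evaluated by the substitution $m=x_k^n-y$, which gives
\begin{equation}
\sum_{y\in\Z} a_i^{y}\, f(x_k^n-y)
= a_i^{x_k^n}\sum_{m\in\Z} f(m)\,(a_i^{-1})^{m}
= a_i^{x_k^n}\,F(a_i^{-1}),
\end{equation}
where the last equality uses that $a_i^{-1}$ lies in $A$ so that the Laurent series of $F$ converges there. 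Factoring $F(a_i^{-1})$ out of row $i$ of the determinant produces $\prod_{i=1}^n F(a_i^{-1})\cdot\det\big(a_i^{x_k^n}\big)_{i,k=1}^n$, and dividing by $\prod_i F(a_i^{-1})$ (nonzero by hypothesis) yields exactly the claimed identity.

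The main obstacle is analytic rather than algebraic: because $W_n$ is an infinite index set, the appeal to Cauchy--Binet and the interchange of the order of summation must be justified by absolute convergence. The crucial input is that each $a_i^{-1}$ is an \emph{interior} point of the annulus $A$, so the Laurent series of $F$ converges there absolutely, i.e.\ $\sum_{m\in\Z}|f(m)|\,|a_i^{-1}|^{m}<\infty$. Under the substitution above this is precisely the statement that $\sum_{y}|a_i^{y}\,f(x_k^n-y)|<\infty$ for every pair $(i,k)$, and I would check that this entrywise absolute summability, together with the finite determinantal (Leibniz) expansion, is enough to make the infinite Cauchy--Binet rearrangement legitimate and the termwise evaluation of the inner sums valid. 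The algebraic identity itself is then immediate; only this convergence bookkeeping needs care.
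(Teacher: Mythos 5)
Your proof is correct, and it is essentially the standard argument: the paper itself gives no proof of this proposition (it simply quotes Proposition 2.8 of~\cite{BF08}), and the proof there proceeds exactly as you do, via the discrete Cauchy--Binet (Andr\'eief) identity followed by the observation that $\sum_{y\in\Z}a_i^{y}f(x_k^n-y)=a_i^{x_k^n}F(a_i^{-1})$ because the $f(m)$ are the Laurent coefficients of $F$ on $A$. Your attention to the absolute convergence needed to justify the infinite rearrangement is the right (and only) delicate point.
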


A simple corollary for the specific case of the transition probability in Definition~\ref{DefDiscreteCharlier} is the following.
\begin{cor}
For $F(z)=1-p+pz^{-1}$, it holds
\begin{equation}
f(m)=\frac{1}{2\pi\I}\oint_{\Gamma_0}\D z \frac{F(z)}{z^{m+1}}
=\left\{
         \begin{array}{ll}
           p, & \textrm{if }m=-1, \\
           1-p, & \textrm{if }m=0, \\
           0, & \textrm{otherwise},
         \end{array}
       \right.
\end{equation}
and
\begin{equation}
\sum_{y^n\in W_n}\Delta_n(y^n) \det(f(x_i^n-y_j^n))_{i,j=1}^n=\Delta_n(x^n).
\end{equation}
\end{cor}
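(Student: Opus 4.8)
The first identity is an immediate residue computation. Writing $F(z)=1-p+pz^{-1}$, I would split $f(m)=\frac{1}{2\pi\I}\oint_{\Gamma_0}\D z\,\big((1-p)z^{-m-1}+p\,z^{-m-2}\big)$ and use $\frac{1}{2\pi\I}\oint_{\Gamma_0}z^{-k-1}\D z=\delta_{k,0}$. The first term contributes $(1-p)\delta_{m,0}$ and the second $p\,\delta_{m,-1}$, which is exactly the claimed value of $f$; in particular $f$ is supported on $m\in\{-1,0\}$, a fact I will need below.

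For the second identity, the plan is to specialize Proposition~\ref{propApp1} to $F(z)=1-p+pz^{-1}$ and let all the parameters $a_i$ tend to $1$. First I would record that $F(1)=1-p+p=1$, so that the prefactor $F(a_1^{-1})\cdots F(a_n^{-1})$ on the right-hand side of Proposition~\ref{propApp1} tends to $1$. Since the determinants $\det(a_i^{y_j^n})$ and $\det(a_i^{x_j^n})$ both vanish at $a_1=\cdots=a_n=1$, I would first divide the identity of Proposition~\ref{propApp1} by the Vandermonde $\prod_{1\le i<j\le n}(a_j-a_i)$ in the $a$-variables. For any increasing vector $\xi\in W_n$, the ratio $\det(a_i^{\xi_j})_{i,j=1}^n\big/\prod_{i<j}(a_j-a_i)$ is a Schur polynomial $s_\lambda(a)$ in $a$ (the accompanying sign is independent of $\xi$, hence will cancel), and as $a\to 1$ it converges to $s_\lambda(1,\dots,1)$. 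By Weyl's dimension formula this value equals $\prod_{i<j}\frac{\xi_j-\xi_i}{j-i}=\Delta_n(\xi)\big/\prod_{k=1}^{n-1}k!$, which is precisely the count already recorded in (\ref{eq2.12}).

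Applying this to both $\xi=y^n$ and $\xi=x^n$, the common factor $\prod_{k=1}^{n-1}k!$ cancels between the two sides; since $f$ does not depend on $a$, the limit passes through the sum over $y^n$ and yields $\sum_{y^n\in W_n}\Delta_n(y^n)\det(f(x_i^n-y_j^n))_{i,j=1}^n=\Delta_n(x^n)$, as desired.

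The only point requiring care is the interchange of the limit $a\to 1$ with the summation over $W_n$, which is a priori an infinite sum. This is harmless here: because $f$ is supported on $\{-1,0\}$, a nonzero $\det(f(x_i^n-y_j^n))$ forces each $y_j^n$ to lie in $\{x_i^n,x_i^n+1\}$ for some $i$, so for fixed $x^n$ only finitely many $y^n$ contribute and the interchange is trivial. The remaining bookkeeping is simply to confirm that the $\xi$-independent signs in the Schur-polynomial identification cancel between the $x$- and $y$-determinants, since the same convention is applied to both. As a sanity check one may note that $f(x_i^n-y_j^n)={\bf P}(x_i^n,y_j^n)$ with $\bf P$ as in (\ref{eq2.11}), so the identity is equivalently the harmonicity of the Vandermonde $\Delta_n$ under the Karlin--McGregor transition, consistent with the relation $L_n h_n=0$ used earlier.
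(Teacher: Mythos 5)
Your proof is correct and follows the route the paper intends: the paper states this corollary without a written proof, presenting it as the specialization $F(z)=1-p+pz^{-1}$ of Proposition~\ref{propApp1} in the limit where all $a_i\to 1$ (as announced at the start of the appendix), which is exactly your argument. Your added details — the division by the Vandermonde in the $a$-variables, the identification of the limit via (\ref{eq2.12}), and the justification of the limit/sum interchange using the finite support of $f$ — are all sound and simply make explicit what the paper leaves to the reader.
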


\begin{prop}[Proposition 2.9 of~\cite{BF08}]\label{propApp2}
Let $a_1,\ldots,a_n$ be non-zero complex numbers and let $F(z)$ be an analytic function in an annulus $A$ centered at the origin that contains all $a_j^{-1}$ for $j=1,\ldots,n-1$ and assume that $F(a_j^{-1})\neq 0$ for all $j=1,\ldots,n-1$. Let us set $y_n^{n-1}=\virt$ and $f(x-\virt)=a_n^x$. Then
\begin{equation}
\frac{\sum_{y^{n-1}\in W_{n-1}}\det\big(a_i^{y_j^{n-1}}\big)_{i,j=1}^{n-1} \det(f(x_i^n-y_j^{n-1}))_{i,j=1}^n}{F(a_1^{-1})\cdots F(a_{n-1}^{-1})} = \det\big(a_i^{x_j^n}\big)_{i,j=1}^n.
\end{equation}
\end{prop}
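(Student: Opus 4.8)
The plan is to prove Proposition~\ref{propApp2} by reducing it to the already-established Proposition~\ref{propApp1}. The key observation is that the left-hand side involves a sum over $y^{n-1}\in W_{n-1}$ of a product of two determinants of \emph{different} sizes: an $(n-1)\times(n-1)$ Vandermonde-type determinant $\det(a_i^{y_j^{n-1}})_{i,j=1}^{n-1}$ and an $n\times n$ determinant $\det(f(x_i^n-y_j^{n-1}))_{i,j=1}^n$, where the mismatch is absorbed by the virtual variable convention $y_n^{n-1}=\virt$ together with $f(x-\virt)=a_n^x$.

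First I would expand the $n\times n$ determinant $\det(f(x_i^n-y_j^{n-1}))_{i,j=1}^n$ along its last column, whose entries are $f(x_i^n-\virt)=a_n^{x_i^n}$. This cofactor expansion extracts the variable $a_n$ (which plays the role of the ``new'' parameter absent from the $(n-1)$-fold product in the denominator) and expresses the full determinant as an alternating sum of $a_n^{x_i^n}$ times the $(n-1)\times(n-1)$ minors obtained by deleting row $i$ and the last column. The remaining minors involve only the genuine $f$-entries $f(x_\cdot^n-y_j^{n-1})$ for $j=1,\ldots,n-1$, so that after substitution the summand decouples into exactly the structure appearing in Proposition~\ref{propApp1}, but now with the $(n-1)$-particle configuration $x^n$ with row $i$ removed in the role of the ``$x$''-variables and with the $(n-1)$ parameters $a_1,\ldots,a_{n-1}$.

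Next I would carry out the sum over $y^{n-1}\in W_{n-1}$ term by term. By Proposition~\ref{propApp1} applied with the $(n-1)$ parameters $a_1,\ldots,a_{n-1}$ and the analytic function $F$ (whose hypotheses are exactly those assumed here, namely analyticity in an annulus containing $a_1^{-1},\ldots,a_{n-1}^{-1}$ and non-vanishing at these points), each summed minor evaluates to $F(a_1^{-1})\cdots F(a_{n-1}^{-1})$ times the corresponding $(n-1)\times(n-1)$ Vandermonde-type determinant in the $x$-variables with row $i$ deleted. Dividing by $F(a_1^{-1})\cdots F(a_{n-1}^{-1})$ cancels this prefactor, and what remains is precisely the cofactor expansion (along the last column, entries $a_n^{x_i^n}$) of the target determinant $\det(a_i^{x_j^n})_{i,j=1}^n$, which reassembles to give the claimed right-hand side.

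The main obstacle I anticipate is purely bookkeeping rather than conceptual: one must track the signs from the cofactor expansion and verify that they match the signs produced when Laplace-expanding the target Vandermonde-type determinant along its last column, so that the two expansions are term-by-term identical. A secondary subtlety is confirming that the virtual-variable convention $f(x-\virt)=a_n^x$ is consistent with extending the summation index set correctly and that the antisymmetrization inherent in the determinants does not produce spurious boundary contributions from configurations where components of $y^{n-1}$ collide; this is handled by the standard fact that such coincidences force two equal columns and hence vanish, exactly as in the proof of Lemma~\ref{LemInterlacing}.
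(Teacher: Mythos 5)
Your argument is correct, and it is a genuinely different route from the one the paper points to: the paper gives no proof of Proposition~\ref{propApp2} at all, citing Proposition~2.9 of~\cite{BF08}, where the identity is established directly by symmetrizing the sum over $y^{n-1}$ to a sum over $\Z^{n-1}$ and applying a generalized Cauchy--Binet (Andr\'eief-type) identity with one extra fixed column, the column built from the virtual variable; the entries $\sum_y a_k^y f(x_i-y)=a_k^{x_i}F(a_k^{-1})$ then produce the prefactor $\prod_{k=1}^{n-1}F(a_k^{-1})$ inside an $n\times n$ determinant whose last column is $a_n^{x_i}$. You instead peel off the virtual column by Laplace expansion and invoke Proposition~\ref{propApp1} at size $n-1$ as a black box, then reassemble the alternating sum $\sum_i(-1)^{i+n}a_n^{x_i^n}\det(a_l^{x_k^n})_{l\le n-1,\,k\ne i}$ as the expansion of $\det(a_i^{x_j^n})_{i,j=1}^n$ along the row (not column, since $i$ indexes rows in the paper's convention --- a cosmetic slip only) containing the $a_n$-powers; the signs do match, since both are cofactor expansions along the $n$-th line of their respective matrices with the deleted index running over the same ordered set. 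Your approach buys a shorter, more modular argument that avoids the rectangular Cauchy--Binet identity entirely; its only cost is the interchange of the finite cofactor sum with the infinite sum over $y^{n-1}\in W_{n-1}$, which is justified by the same absolute convergence (geometric decay of the Laurent coefficients of $F$ on the annulus) that is already implicitly needed for Proposition~\ref{propApp1} itself, so no new hypothesis is required. Your final remark about colliding components of $y^{n-1}$ is moot here, since $W_{n-1}$ excludes coincidences by definition; it would only become relevant if one symmetrized over all of $\Z^{n-1}$ as in the \cite{BF08} proof.
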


A corollary concerning the transition kernel (\ref{eqMarkovLink}) is the following.
\begin{cor}\label{CorA4}
Let us choose $F(z)=(1-z)^{-1}$. Then
\begin{equation}
f(m)=\frac{1}{2\pi\I}\oint_{\Gamma_0}\D z \frac{F(z)}{z^{m+1}}
=\left\{
         \begin{array}{ll}
           1, & \textrm{if }m\geq 0, \\
           0, & \textrm{otherwise},
         \end{array}
       \right.
\end{equation}
and, with $y_n^{n-1}=\virt$,
\begin{equation}
\sum_{y^{n-1}\in W_{n-1}}(n-1)!\Delta_{n-1}(y^{n-1}) \det(f(x_i^n-y_j^{n-1}))_{i,j=1}^n=\Delta_n(x^n).
\end{equation}
\end{cor}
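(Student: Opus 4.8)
The plan is to read the statement off Proposition~\ref{propApp2} by specializing $F$ and letting all the $a_i$ collapse to $1$. First I would dispatch the residue computation. Writing $(1-z)^{-1}=\sum_{k\ge 0}z^k$, the integrand $F(z)/z^{m+1}=\sum_{k\ge0}z^{k-m-1}$ has residue at $z=0$ equal to its coefficient of $z^{-1}$, i.e.\ the term $k=m$; this coefficient is $1$ when $m\ge 0$ and there is no admissible $k\ge 0$ when $m<0$, giving $f(m)=\Id_{[m\ge 0]}$. (The contour $\Gamma_0$ encircles only $z=0$, so the pole at $z=1$ does not contribute.)

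For the summation identity I would invoke Proposition~\ref{propApp2} with this $F$, for which $F(a_j^{-1})=a_j/(a_j-1)$, and then pass to the confluent limit $a_1,\dots,a_n\to 1$. The mechanism is the bialternant identity $\det(a_i^{\xi_j})_{i,j}=\Delta(a)\,s_{\lambda(\xi)}(a)$: both the factor $\det(a_i^{y_j})_{i,j=1}^{n-1}$ in the summand and the factor $\det(a_i^{x_j})_{i,j=1}^{n}$ on the right-hand side split into a Vandermonde in the $a$'s times a Schur polynomial, and the latter is continuous at $a=(1,\dots,1)$ with value $s_\lambda(1^m)=\prod_{i<j}\frac{\lambda_i-\lambda_j+j-i}{j-i}$. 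Re-indexing the partition associated with an increasing integer vector $\xi$ turns this evaluation into $\Delta_m(\xi)/\prod_{k=1}^{m-1}k!$. Thus $\det(a_i^{y_j})_{i,j=1}^{n-1}\sim \Delta_{n-1}(a')\,\Delta_{n-1}(y)/\prod_{k=1}^{n-2}k!$ and $\det(a_i^{x_j})_{i,j=1}^{n}\sim \Delta_n(a)\,\Delta_n(x)/\prod_{k=1}^{n-1}k!$, while the virtual column $f(x_i-\virt)=a_n^{x_i}\to 1$ collapses to the all-ones column appearing in the statement.

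The bookkeeping that makes the constant come out is the cancellation of two singular factors. The prefactor $\prod_{j=1}^{n-1}F(a_j^{-1})\sim\prod_{j=1}^{n-1}(a_j-1)^{-1}$ blows up, but this is compensated by the extra vanishing of $\Delta_n(a)=\Delta_{n-1}(a')\prod_{i=1}^{n-1}(a_n-a_i)$ relative to the $\Delta_{n-1}(a')$ carried by the summand; after dividing the identity of Proposition~\ref{propApp2} by $\Delta_{n-1}(a')$ and taking the limit, the surviving numerical factor is the ratio $\prod_{k=1}^{n-1}k!\big/\prod_{k=1}^{n-2}k!=(n-1)!$, which is exactly the constant in the claim. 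Equivalently, one may first verify by the triangularity argument of Lemma~\ref{LemInterlacing} that $\det(f(x_i^n-y_j^{n-1}))_{i,j=1}^n=\Id_{[y^{n-1}\prec x^n]}$ (up to the sign discussed below), reducing the sum to $(n-1)!\sum_{y^{n-1}\prec x^n}\Delta_{n-1}(y^{n-1})$, whose value $\Delta_n(x^n)$ is precisely the confluent Vandermonde summation supplied by Propositions~\ref{propApp1}--\ref{propApp2}.

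The hard part is the limit itself: the sum defining the left-hand side converges only away from $a=(1,\dots,1)$, which sits on the boundary of the annulus of Proposition~\ref{propApp2}, so one cannot take the limit termwise inside the summation. The safe route is to keep the closed-form right-hand side (where the sum has already been evaluated) and resolve the resulting $\infty\cdot 0$ indeterminacy through the Schur-polynomial asymptotics above, letting $a_n$ (which controls the virtual column) tend to $1$ in coordination with the remaining variables. A secondary point requiring care is the sign: the placement of the virtual column $y_n^{n-1}=\virt$ contributes a permutation sign, which must be reconciled with the convention $\Delta_n(x)=\prod_{i<j}(x_j-x_i)$ fixed in (\ref{eq2.5}).
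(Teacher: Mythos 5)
Your proposal is correct and takes essentially the route the paper intends: the appendix explicitly says these corollaries follow from Propositions~\ref{propApp1}--\ref{propApp2} ``by limits when all $a_i$'s go to the same value,'' and your confluent computation (bialternant asymptotics $\det(a_i^{\xi_j})\sim\Delta(a)\,\Delta(\xi)/\prod_k k!$, cancellation of $\prod_j F(a_j^{-1})$ against $\prod_j(a_n-a_j)$, with the cleanest justification being to set $a_n=1$ first so the virtual column becomes all ones and the sum becomes finitely supported before sending $a_1,\dots,a_{n-1}\to1$) supplies exactly the details the paper omits. Your sign worry is moreover justified: with $f(m)=\Id_{[m\geq 0]}$ and the all-ones virtual column, subtracting that column from the others gives $\det(f(x_i^n-y_j^{n-1}))_{i,j=1}^n=(-1)^{n-1}\,\Id_{[y^{n-1}\prec x^n]}$, so the second display of the corollary as printed holds only up to the factor $(-1)^{n-1}$ (for $n=2$ the left-hand side evaluates to $-(x_2^2-x_1^2)$), while the underlying sign-free identity $(n-1)!\sum_{y^{n-1}\prec x^n}\Delta_{n-1}(y^{n-1})=\Delta_n(x^n)$, which is what (\ref{eq2.12}) actually uses, is exactly what your argument establishes.
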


By the above results, we can define the transition kernels
\begin{equation}
T_n(a_1,\ldots,a_n;F)(x^n,y^n) =\frac{\det\big(a_i^{y_j^n}\big)_{i,j=1}^n}{\det\big(a_i^{x_j^n}\big)_{i,j=1}^n}\frac{\det(f(x_i^n-y_j^n))_{i,j=1}^n}{F(a_1^{-1})\cdots F(a_n^{-1})}
\end{equation}
for $x^n,y^n\in W_n$, and
\begin{equation}
T^n_{n-1}(a_1,\ldots,a_n;F)(x^n,y^{n-1}) =\frac{\det\big(a_i^{y_j^{n-1}}\big)_{i,j=1}^{n-1} }{\det\big(a_i^{x_j^n}\big)_{i,j=1}^n}\frac{\det(f(x_i^n-y_j^{n-1}))_{i,j=1}^n}{F(a_1^{-1})\cdots F(a_{n-1}^{-1})}
\end{equation}
for $x^n\in W_n$ and $y^{n-1}\in W_{n-1}$. In our application, $T_n$ is $P_n$ and $T^n_{n-1}$ is $\Lambda^n_{n-1}$. The intertwining condition is then a consequence of the following result.
\begin{prop}[Proposition 2.10 of~\cite{BF08}]\label{propApp3}
Let $F_1$ and $F_2$ two functions holomorphic in an annulus centered at the origin and containing all the $a_j^{-1}$'s that are nonzero at these points.
Then,
\begin{equation}
\begin{aligned}
&T_n(F_1) T_n(F_2)=T_n(F_2) T_n(F_1)=T_n(F_1 F_2),\\
&T_n(F_1) T^n_{n-1}(F_2)=T^n_{n-1}(F_1) T_n(F_2)=T^n_{n-1}(F_1 F_2).
\end{aligned}
\end{equation}
\end{prop}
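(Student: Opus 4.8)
The plan is to reduce every one of the claimed identities to a single application of the discrete Andr\'eief (Heine) identity, after stripping off the conjugating determinantal factors. I will write $V_n(x)=\det(a_i^{x_j})_{i,j=1}^n$ for the generalized Vandermonde appearing in the numerators and denominators, and recall that the coefficients $f(m)=\frac{1}{2\pi\I}\oint_{\Gamma_0}\D z\, F(z) z^{-m-1}$ are exactly the Laurent coefficients of $F$, so that $F(z)=\sum_m f(m) z^m$ and in particular $\sum_m f(m) a^{-m}=F(a^{-1})$.

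First I would treat $T_n(F_1)T_n(F_2)$. Composing the two kernels and summing over the intermediate configuration $u\in W_n$, the factor $V_n(u)$ produced in the numerator of $T_n(F_1)(x,u)$ cancels the $V_n(u)$ in the denominator of $T_n(F_2)(u,y)$. What remains, up to the prefactor $V_n(y)/V_n(x)$ and the scalar normalization, is the sum
\begin{equation}
\sum_{u\in W_n}\det\big(f_1(x_i-u_j)\big)_{i,j=1}^n\,\det\big(f_2(u_i-y_j)\big)_{i,j=1}^n .
\end{equation}
Setting $\varphi_i(u)=f_1(x_i-u)$ and $\psi_i(u)=f_2(u-y_i)$ and transposing the second determinant, this is precisely the left-hand side of the Andr\'eief identity, so it equals $\det\big(\sum_{u\in\Z}\varphi_i(u)\psi_j(u)\big)_{i,j=1}^n=\det\big((f_1*f_2)(x_i-y_j)\big)_{i,j=1}^n$, where $*$ denotes convolution on $\Z$. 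Since convolution of Laurent coefficient sequences corresponds to multiplication of symbols, $(f_1*f_2)(m)$ is the coefficient sequence of $F_1F_2$, and together with $F_1(a_k^{-1})F_2(a_k^{-1})=(F_1F_2)(a_k^{-1})$ in the normalization this reassembles into $T_n(F_1F_2)(x,y)$. Commutativity and the symmetric statement follow from $F_1F_2=F_2F_1$.

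Next I would handle the mixed relations. For $T_n(F_1)T^n_{n-1}(F_2)$ the same cancellation of $V_n(u)$ occurs and Andr\'eief applies with the convention $\psi_n(u)=f_2(u-\mathrm{virt})=a_n^{u}$ for the virtual column; the only new computation is the $n$-th column of the resulting matrix, $\sum_{u}f_1(x_i-u)a_n^{u}=a_n^{x_i}F_1(a_n^{-1})$, which both produces the correct virtual column $a_n^{x_i}$ for $F_1F_2$ and contributes the extra factor $F_1(a_n^{-1})$ needed to turn $\prod_{k=1}^n F_1(a_k^{-1})\prod_{k=1}^{n-1}F_2(a_k^{-1})$ into $\prod_{k=1}^{n-1}(F_1F_2)(a_k^{-1})$. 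The remaining case $T^n_{n-1}(F_1)T_{n-1}(F_2)$, whose intermediate sum runs over $W_{n-1}$ while the first determinant is $n\times n$ with a fixed (virtual) last column, requires the augmented-column version of Andr\'eief: expanding the $n\times n$ determinant along the virtual column and applying the ordinary identity to each $(n-1)\times(n-1)$ minor, one checks that the summed product equals the $n\times n$ determinant whose first $n-1$ columns are the convolutions $(f_1*f_2)(x_i-y_j)$ and whose last column is the preserved virtual column $a_n^{x_i}$, i.e.\ $T^n_{n-1}(F_1F_2)$.

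The main obstacle I anticipate is the bookkeeping for the virtual variable in this last case --- keeping track of signs when expanding along the augmented column and verifying that the normalization products telescope to the correct range $k=1,\ldots,n-1$ --- rather than any conceptual difficulty. A minor technical point to dispatch at the outset is the absolute convergence of the intermediate sums needed to justify the Andr\'eief manipulations; this is immediate here because holomorphy of the $F_j$ in an annulus forces the Laurent coefficients $f_j(m)$ to decay geometrically as $|m|\to\infty$, so that all convolutions and the doubly-indexed sums converge absolutely and the interchange of summation is legitimate.
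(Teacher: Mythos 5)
Your argument is correct: the paper itself gives no proof of this proposition (it is quoted verbatim from Proposition~2.10 of~\cite{BF08}), and the Cauchy--Binet/Andr\'eief reduction you carry out --- cancelling the intermediate Vandermonde-type factors, converting the convolution of Laurent coefficients into the product $F_1F_2$, and handling the virtual column by expansion so that the factor $F_1(a_n^{-1})$ is absorbed into the normalization --- is precisely the standard argument behind the cited result. The only point worth flagging is that the middle expression $T^n_{n-1}(F_1)\,T_n(F_2)$ must be read as $T^n_{n-1}(F_1)\,T_{n-1}(F_2)$ for the composition to be defined, which is exactly the reading you adopted.
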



\begin{thebibliography}{10}

\bibitem{ACJM13}
M.~Adler, S.~Chhita, K.~Johansson, and P.~{van Moerbeke}, \emph{{Tacnode
  GUE-minor processes and double Aztec diamonds}}, Probab. Theory Relat. Fields
  \textbf{162} (2015), 275--325.

\bibitem{AJvM14}
M.~Adler, K.~Johansson, and P.~van Moerbeke, \emph{Double aztec diamonds and
  the tacnode process}, Adv. Math. \textbf{252} (2014), 518--571.

\bibitem{BBBCCV14}
D.~Betea, C.~Boutillier, J.~Bouttier, G.~Chapuy, S.~Corteel, and M.~Vuleti{\'
  c}, \emph{{Perfect sampling algorithm for Schur processes}}, arXiv:1407.3764
  (2014).

\bibitem{Bor10}
A.~Borodin, \emph{{Schur dynamics of the Schur processes}}, Adv. Math.
  \textbf{228} (2011), 2268–--2291.

\bibitem{BC11}
A.~Borodin and I.~Corwin, \emph{Macdonald processes}, Probab. Theory Relat.
  Fields \textbf{158} (2014), 225--400.

\bibitem{BF07}
A.~Borodin and P.L. Ferrari, \emph{{Large time asymptotics of growth models on
  space-like paths I: PushASEP}}, Electron. J. Probab. \textbf{13} (2008),
  1380--1418.

\bibitem{BF08}
A.~Borodin and P.L. Ferrari, \emph{Anisotropic growth of random surfaces in
  $2+1$ dimensions}, Comm. Math. Phys. \textbf{325} (2014), 603--684.

\bibitem{BFS07b}
A.~Borodin, P.L. Ferrari, and T.~Sasamoto, \emph{{Large time asymptotics of
  growth models on space-like paths II: PNG and parallel TASEP}}, Comm. Math.
  Phys. \textbf{283} (2008), 417--449.

\bibitem{RB04}
A.~Borodin and E.M. Rains, \emph{{Eynard-Mehta theorem, Schur process, and
  their Pfaffian analogs}}, J. Stat. Phys. \textbf{121} (2006), 291--317.

\bibitem{BBCCR15}
C.~Boutillier, J.~Bouttier, G.~Chapuy, S.~Corteel, and S.~Ramassamy,
  \emph{{Dimers on Rail Yard Graphs}}, arXiv:1504.05176 (2015).

\bibitem{CJY14}
S.~Chhita, K.~Johansson, and B.~Young, \emph{{Asymptotic domino statistics in
  the Aztec diamond}}, Ann. Appl. Probab. \textbf{25} (2015), 1232--1278.

\bibitem{CY14}
S.~Chhita and B.~Young, \emph{{Coupling functions for domino tilings of Aztec
  diamonds}}, Adv. Math. \textbf{259} (2014), 173--251.

\bibitem{CEP96}
H.~Cohn, N.~Elkies, and J.~Propp, \emph{Local statistics for random domino
  tilings of the {Aztec} diamond}, Duke Math. J. \textbf{85} (1996), 117--166.

\bibitem{DF90}
P.~Diaconis and J.A. Fill, \emph{Strong stationary times via a new form of
  duality}, Ann. Probab. \textbf{18} (1990), 1483--1522.

\bibitem{DM14}
E.~Duse and A.~Metcalfe, \emph{{Asymptotic geometry of discrete interlaced
  patterns: Part I}}, arXiv:1412.6653 (2014).

\bibitem{EKLP92}
N.~Elkies, G.~Kuperbert, M.~Larsen, and J.~Propp, \emph{{Alternating-Sign
  Matrices and Domino Tilings I and II}}, J. Algebr. Comb. \textbf{1} (1992),
  111--132.

\bibitem{FerAZTEC}
P.L. Ferrari, \emph{{Java animation of the shuffling algorithm of the Aztec
  diamong and its associated particles' dynamics (discrete time TASEP, parallel
  update)}}, {\small \verb+http://wt.iam.uni-bonn.de/ferrari/research/+\\
  \verb+animationaztec/AnimationAztec.html+} (2009).

\bibitem{FS06}
P.L. Ferrari and H.~Spohn, \emph{Domino tilings and the six-vertex model at its
  free fermion point}, J. Phys. A: Math. Gen. \textbf{39} (2006), 10297--10306.

\bibitem{JPS98}
W.~Jockush, J.~Propp, and P.~Shor, \emph{Random domino tilings and the arctic
  circle theorem}, arXiv:math.CO/9801068 (1998).

\bibitem{Jo03}
K.~Johansson, \emph{The arctic circle boundary and the {Airy} process}, Ann.
  Probab. \textbf{33} (2005), 1--30.

\bibitem{JN06}
K.~Johansson and E.~Nordenstam, \emph{{Eigenvalues of GUE minors}}, Electron.
  J. Probab. \textbf{11} (2006), 1342--1371.

\bibitem{Ken01}
R.~Kenyon, \emph{Dominos and {G}aussian free field}, Ann. Probab. \textbf{29}
  (2001), 1128--1137.

\bibitem{Nor08}
E.~Nordenstam, \emph{{On the Shuffling Algorithm for Domino Tilings}},
  Electron.~J.~Probab. \textbf{15} (2010), 75--95.

\bibitem{NY11}
E.~Nordenstam and B.~Young, \emph{{Shuffling on Novak Half-Hexagons and Aztec
  Half-Diamonds}}, Electr. J. Comb. \textbf{18} (2011).

\bibitem{OCon02}
N.~O'Connell, \emph{Random matrices, non-colliding processes and queues},
  Lecture Notes in Mathematics \textbf{1801} (2003).

\bibitem{OR01}
A.~Okounkov and N.~Reshetikhin, \emph{Correlation function of {S}chur process
  with application to local geometry of a random 3-dimensional {Y}oung
  diagram}, J. Amer. Math. Soc. \textbf{16} (2003), 581--603.

\bibitem{Pet12}
L.~Petrov, \emph{{Asymptotics of random lozenge tilings via Gelfand-Tsetlin
  schemes}}, Probab. Theory Relat. Fields \textbf{160} (2014), 429--487.

\bibitem{PS02}
M.~Pr{\"a}hofer and H.~Spohn, \emph{Scale invariance of the {PNG} droplet and
  the {A}iry process}, J. Stat. Phys. \textbf{108} (2002), 1071--1106.

\bibitem{Pro03}
J.~Propp, \emph{{Generalized Domino-Shuffling}}, Theoret. Comput. Sci.
  \textbf{303} (2003), 267--301.

\bibitem{TW06}
C.~Tracy and H.~Widom, \emph{{The Pearcey Process}}, Comm. Math. Phys.
  \textbf{263} (2006), 381--400.

\bibitem{ZJ00}
P.~Zinn-Justin, \emph{Six-vertex model with domain wall boundary conditions and
  one-matrix models}, Phys. Rev. E \textbf{62} (2000), 3411--3418.

\end{thebibliography}

\end{document}